\def\notes{0}
\crefname{claim}{Claim}{Claims}
\newtheorem{theorem}{Theorem}[section]
\newtheorem{lemma}[theorem]{Lemma}
\newtheorem{proposition}[theorem]{Proposition}
\newtheorem{definition}[theorem]{Definition}
\newtheorem{claim}[theorem]{Claim}
\newtheorem{remark}[theorem]{Remark}
\newcommand{\X}{\mathcal{X}}
\newcommand{\Y}{\mathcal{Y}}
\newcommand{\Z}{\mathcal{Z}}
\newcommand{\ve}{{\varepsilon}}
\newcommand{\reals}{\mathbb{R}}
\renewcommand{\epsilon}{\varepsilon}
\renewcommand{\phi}{\varphi}
\newcommand{\E}{\mathbb{E}}
\newcommand{\I}{\mathbb{1}}
\newcommand{\B}{\mathcal{B}}
\newcommand{\cT}{{\cal T}}
\newcommand{\eps}{\varepsilon}
\newcommand{\etal}{\emph{et al.\/}}
\newcommand {\alice} {{\sf {Alice}}}
\newcommand {\bob} {{\sf {Bob}}}
\newcommand{\prt}{{\mathsf{prt}}}
\newcommand{\qprt}{{\mathsf{qprt}}}
\newcommand{\rprt}{{\mathsf{rprt}}}
\newcommand {\srec} {\mathsf{srec}}
\newcommand{\CC}{\mathsf{CC}}
\newcommand{\IC}{\mathsf{IC}}
\newcommand{\QC}{\mathsf{QC}}
\newcommand{\adv}{\mathsf{adv}}
\newcommand{\ceil}[1]{\left\lceil #1 \right\rceil}
\newcommand{\OPT}{\mathsf{OPT}}
\newcommand{\subcube}{\mathsf{subcube}}
\newcommand{\size}{\mathsf{size}}
\newcommand{\support}{\mathsf{support}}
\renewcommand{\Bar}{\widebar}
\newcommand{\poly}{\text{poly}}
\newcommand{\changes}{}
\newcommand{\lref}[2][]{\hyperref[#2]{#1~\ref*{#2}}}
\renewcommand{\eqref}[1]{\hyperref[#1]{(\ref*{#1})}}
\numberwithin{equation}{section}
\newcommand {\Rahul} {Rahul Jain}
\newcommand {\Prahladh} {Prahladh Harsha}
\newcommand {\Jaikumar}{Jaikumar Radhakrishnan}
\newcommand{\CQTCS}{CQT, MajuLab and NUS, Singapore. }
\newcommand {\TIFR}{TIFR, Mumbai.}
\newcommand{\simons}{Work done when the three authors were visiting
  the Simons Institute for the Theory of Computing, Berkeley,
  USA.}
\newcommand {\Title} {Partition bound is quadratically tight for product distributions} 
\date{}
\title{\Title\footnote{\simons}}
\author{
\Prahladh\thanks{\TIFR\ {\tt prahladh@tifr.res.in}. Research
    supported in part by ISF-UGC grant 1399/4 and Google India Fellowship.}
\and 
\Rahul\thanks{\CQTCS\ {\tt rahul@comp.nus.edu.sg}. Partly supported by
  the Singapore Ministry of Education via Academic Research Fund Tier 3 MOE2012-T3-1-009 
and Young Researcher Award, National University of Singapore.}  
\and
\Jaikumar\thanks{\TIFR\ {\tt jaikumar@tifr.res.in}.} \\}
\begin{document}
\begin{titlepage}

\maketitle

\thispagestyle{empty}
\setcounter{page}{0}

\begin{abstract}


  Let $f : \{0,1\}^n \times \{0,1\}^n \rightarrow \{0,1\}$ be a
  2-party function. For every product distribution
  $\mu$ on $\{0,1\}^n \times \{0,1\}^n$, we show that 
$$\CC^\mu_{0.49}(f) =
  O\left(\left(\log \prt_{1/8}(f) \cdot \log \log \prt_{1/8}(f)\right)^2\right),$$
  where $\CC^\mu_\ve(f)$ is the distributional communication
  complexity of $f$ with error at most $\ve$ under the distribution
  $\mu$ and $\prt_{1/8}(f)$ is the {\em partition bound} of $f$, as
  defined by Jain and Klauck [{\em Proc. 25th CCC}, 2010]. We also prove a similar
  bound in terms of $\IC_{1/8}(f)$, the {\em information complexity}
  of $f$, namely,
$$\CC^\mu_{0.49}(f) =
O\left(\left(\IC_{1/8}(f) \cdot \log \IC_{1/8}(f)\right)^2\right).$$
The latter bound was recently and independently established by
Kol [{\em Proc. 48th STOC}, 2016] using a different technique.

We show a similar result for query complexity under product
distributions. Let $g : \{0,1\}^n \rightarrow \{0,1\}$ be a function.
For every bit-wise product distribution $\mu$ on $\{0,1\}^n$, we show
that 
$$\QC^\mu_{0.49}(g) =
  O\left(\left( \log \qprt_{1/8}(g) \cdot \log \log\qprt_{1/8}(g) \right)^2 \right),$$
  where $\QC^\mu_{\ve}(g)$ is the distributional query complexity of
  $f$ with error at most $\ve$ under the distribution $\mu$ and
  $\qprt_{1/8}(g))$ is the {\em query partition bound} of the
  function $g$.

  Partition bounds were introduced (in both
  communication complexity and query complexity models) to provide
  LP-based lower bounds for randomized communication complexity and
  randomized query complexity. Our results demonstrate that these
  lower bounds are polynomially tight for {\em product}
  distributions.

\end{abstract}
\end{titlepage}

\section{Introduction}

Over the last decade, several lower bound techniques using linear
programming formulations and information complexity methods have been
developed for problems in communication complexity and query
complexity. One of the central questions in communication complexity
is to understand the tightness of these lower bound techniques. For
instance, over the last few years, considerable effort has gone into
understanding the {\em information complexity} measure. Informally
speaking, (internal) information complexity is the amount of
information the two parties reveal to each other about their
respective inputs while computing the joint function. It is known that
for product distributions, the internal information complexity not
only lower bounds but also upper bounds the distributional
communication complexity (up to logarithmic multiplicative
factors in the communication complexity)~\cite{BarakBCR2013}. On the other hand, recent works due to Ganor, Kol and
Raz~\cite{GanorKR2014,GanorKR2015,GanorKR2016} show that there exist non-product
distributions which exhibit exponential separation between internal
information complexity and distributional communication
complexity\footnote{The third result of Ganor, Kol and
  Raz~\cite{GanorKR2016} actually demonstrates an exponential
  separation between external information and communication
  complexity, albeit not for computing a Boolean function.}. However, it is still open if internal information
complexity (or a polynomial of it) upper bounds the public-coin
randomized communication complexity (up to logarithmic multiplicative
factors in the input size)~\cite{FontesJKLLR2015}.

Jain and Klauck~\cite{JainK2010}, using tools from linear programming,
gave a uniform treatment of several of the existing lower bound
techniques and proposed the {\em partition bound}. This leads to
following related (but incomparable) conjecture: does a polynomial of
the partition bound yield an upper bound on the communication complexity? We are
not aware of any counterexample to this conjecture\footnote{The recent
work of G\"o\"os~et al.~\cite{GoosJPW2015} demonstrates the existence
of a total function for which the partition bound is strictly sublinear in the
randomized communication complexity. This still does not rule out
communication complexity being bound by a polynomial of the partition bound.}.

We consider these questions when the inputs to Alice and Bob are drawn
from a product distribution and show the following.

\begin{theorem}\label{thm:ccquad} Let $f : \{0,1\}^n \times \{0,1\}^n
  \rightarrow \{0,1\}$, and let $\IC_{\ve}(f)$ and $\prt_{\ve}(f)$ be
  the information complexity and 
  partition bound respectively of $f$ with error at most $\ve$. For a product
  distribution $\mu$ on $\{0,1\}^n \times \{0,1\}^n$, the
  distributional communication complexity of $f$ under distribution
  $\mu$ with error at most $0.49$, denoted by $\CC^\mu_{0.49}(f)$, can
  be bounded above as follows:\changes
\begin{align}
\label{eq:IC}\CC^\mu_{0.49}(f) &=O\left(\left(\IC_{1/8}(f) \cdot \log 
\IC_{1/8}(f)\right)^2\right),\\
\label{eq:prt}\CC^\mu_{0.49}(f) &=O\left(\left(\log \prt_{1/8}(f) \cdot \log \log 
\prt_{1/8}(f)\right)^2\right).
\end{align}
\end{theorem}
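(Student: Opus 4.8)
The plan is to extract from the partition bound an explicit fractional cover of $f$ by labelled rectangles, reduce the problem to a distributed sampling task for that cover, and solve the task with $\widetilde{O}((\log\prt_{1/8}(f))^2)$ communication by exploiting that $\mu$ is product. I will first set up the main inequality \eqref{eq:prt}; \eqref{eq:IC} will follow at the end. Write $P:=\prt_{1/8}(f)$ and $m:=\log P$. From an optimal solution $\{w_{R,z}\}$ of the partition-bound LP, normalising by $P=\sum_{R,z}w_{R,z}$, I obtain a distribution $\lambda$ over pairs $(R=A\times B,\,z\in\{0,1\})$ such that for every input $(x,y)$ we have $\lambda[(x,y)\in R]=1/P$ and $\lambda[\,z=f(x,y)\mid (x,y)\in R\,]\ge 7/8$. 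Hence, if on inputs $(x,y)$ Alice and Bob could jointly draw a pair $(R,z)\sim\lambda$ conditioned on $(x,y)\in R$ and output $z$, they would err with probability at most $1/8$; an approximate sampler that is within statistical distance $\delta_0$ of this conditional distribution and that fails outright (the two players disagree, or abort) with probability $\delta_1$ still has error at most $1/8+\delta_0+\delta_1/2<0.49$ for small absolute constants $\delta_0,\delta_1$, so there is no need for error amplification and we may keep $\delta_0,\delta_1$ constant.

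It remains to build such a sampler. Write $p:=\lambda(\cdot\mid x\in A)$ and $q:=\lambda(\cdot\mid y\in B)$, which Alice and Bob respectively compute from their inputs, and let $r:=\lambda(\cdot\mid (x,y)\in R)$ be the target; one checks that $r$ equals $p$ conditioned on the Bob-recognisable event $\{y\in B\}$ and equals $q$ conditioned on the Alice-recognisable event $\{x\in A\}$, and that $D_\infty(r\,\|\,\lambda)=m$ while $D_\infty(r\,\|\,p)$ and $D_\infty(r\,\|\,q)$ are at most $m$. So the task is a two-player correlated-sampling problem: Alice holds $p$, Bob holds $q$, both at $\infty$-divergence at most $m$ from a common target $r$, and with shared randomness they must agree on a single draw from $r$. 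For worst-case inputs this is impossible with few bits --- taking $p$ uniform on Alice's ``relevant'' rectangles and $q$ uniform on Bob's encodes a set-intersection problem on $\support(\lambda)$ --- so the product structure of $\mu$ must be used. I would use it in the style of Barak, Braverman, Chen and Rao: because $\mu=\mu_X\times\mu_Y$, Alice may locally draw a fake $y'\sim\mu_Y$ and Bob a fake $x'\sim\mu_X$, so each player can simulate, consistently in distribution, the rectangle the other is likely holding, and only the rare points where these simulations diverge need be communicated.

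Concretely I expect the sampler to run in $\widetilde{O}(m)$ rounds, the shared randomness being a long list of i.i.d.\ draws from $\lambda$ together with auxiliary uniform thresholds. In each round the players perform a rejection step (thresholding so as to move $p$ and $q$ one ``level'' closer to $r$) and then exchange an $\widetilde{O}(m)$-bit fingerprint that lets them test whether they have locked onto a common rectangle containing $(x,y)$ and, if not, shrink the candidate set consistently; the point --- where the product structure, and the conditional independence of Alice's and Bob's draws given the shared randomness, enter --- is that one round can be done with $\widetilde{O}(m)$ bits while leaving the players in disagreement with probability only $O(\delta_1/m)$. Summing, the disagreement probability stays $O(\delta_1)$, after $\widetilde{O}(m)$ rounds the candidate set is a single rectangle distributed within statistical distance $\delta_0$ of $r$, its label is output, and the communication is $\widetilde{O}(m)\cdot\widetilde{O}(m)=\widetilde{O}(m^2)=O\big((\log\prt_{1/8}(f)\cdot\log\log\prt_{1/8}(f))^2\big)$, which is \eqref{eq:prt}. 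For \eqref{eq:IC}: the same argument applied to the relaxed partition bound $\rprt_{1/4}(f)$ in place of $\prt_{1/8}(f)$ (correctness $3/4$ instead of $7/8$, which still leaves error below $0.49$) gives $\CC^\mu_{0.49}(f)=\widetilde{O}((\log\rprt_{1/4}(f))^2)$, and $\log\rprt_{1/4}(f)=O(\IC_{1/8}(f))$ by the zero-communication-protocol lower bound on information complexity.

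The reductions in the first two paragraphs are routine once the partition-bound LP and the $\infty$-divergence bounds are written out. The real difficulty is the per-round reconciliation of the third paragraph: carrying it out in $\widetilde{O}(m)$ bits with no dependence on $|\support(\lambda)|$ --- a black-box use of the product-distribution compression of Barak \etal\ would cost $\widetilde{O}(m)$ times a polylogarithm of the support size, which can be polynomial in $n$ and would ruin the bound, so a hands-on argument tailored to the rectangle cover is needed --- while simultaneously keeping the accumulated per-round failure and statistical errors within the $0.49$ budget. Making those two estimates close is where I expect most of the work to lie.
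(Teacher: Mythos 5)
Your reduction to a distributed sampling task is sound as far as it goes: the normalisation of the partition-bound LP, the conditional correctness $\lambda[z=f(x,y)\mid(x,y)\in R]\ge 7/8$, and the bounds $D_\infty(r\,\|\,p),D_\infty(r\,\|\,q)\le \log P$ all check out (and the variant for $\rprt$ survives the relaxation of the equality constraint). But the proof has a genuine gap precisely at the step you yourself flag as ``the real difficulty'': the per-round reconciliation in $\tilde{O}(m)$ bits with per-round disagreement $O(\delta_1/m)$ is asserted, not established, and nothing in the sketch supports it. Two concrete problems. First, $\lambda$, $p$, $q$ and $r$ are determined by $f$ alone and do not depend on $\mu$; since (as you note) the worst-case version of the sampling task embeds set intersection, the only way $\mu$ can rescue you is through an average-case bound, over $(x,y)\sim\mu$, on the failure probability of the reconciliation. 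The proposal supplies no mechanism for such a bound: Alice drawing a fake $y'\sim\mu_Y$ gives her a sample from the marginal, but nothing that correlates with Bob's actual acceptance event $\{y\in B\}$ on the shared list of draws from $\lambda$. In Barak--Braverman--Chen--Rao the fake inputs are useful because a fixed low-information protocol tree supplies, node by node, a sequence of posteriors with small divergence that the players resample from; here there is no such tree --- the sampler \emph{is} the protocol you are trying to construct --- so invoking that style of argument is circular. Second, even granting some reconciliation primitive, the claimed budget ($\tilde{O}(m)$ bits per round independent of $|\support(\lambda)|$, which can be $2^{\Theta(n)}$) is not an instance of any known one-shot sampling or compression lemma, and the accounting that would keep the accumulated statistical and disagreement errors inside the $0.49$ budget is never done. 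As written, the central quantitative content of the theorem is deferred to an unproved claim.

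For contrast, the paper avoids sampling altogether. It passes from the partition bound to the (distributional) smooth rectangle bound, extracts a large rectangle biased towards $0$ from the LP solution, and builds a Nisan--Wigderson-style protocol tree with few leaves by a two-parameter induction: one branch of the decomposition has strictly smaller $\srec^1$ value, another has strictly smaller $\mu$-mass. The product structure of $\mu$ enters through the exact identity $\frac{\mu(R^{(01)}\cap R)}{\mu(R^{(01)})}\cdot\frac{\mu(R^{(10)}\cap R)}{\mu(R^{(10)})}=\frac{\mu(R^{(00)}\cap R)}{\mu(R^{(00)})}\cdot\frac{\mu(R^{(11)}\cap R)}{\mu(R^{(11)})}$, which forces one of the two off-diagonal quadrants to carry a reduced LP value --- a purely combinatorial use of productness that is much more elementary than the information-theoretic resampling you envisage. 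If you want to pursue your route, you are essentially reproving Kol's compression theorem (and extending it from information complexity to the partition bound), which is a full paper's worth of work, not a routine lemma.
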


Our technique yields bounds more general than those stated above (see
discussion after \cref{prop:sr_rprt} for this generalization). We
remark that recently (and independently of this work)
Kol~\cite{Kol2016} obtained the bound~\eqref{eq:IC} using very
different techniques. Kol's result is stronger in the sense that her
bound is in terms of the information complexity $\IC^\mu(f)$ for the
product distribution $\mu$, while our result is in terms of the worst
case information complexity $\IC(f)$ (note, $\IC_\ve(f)
= \max_{\mu} \IC^\mu_\ve(f)$). In fact, Kol showed that
\begin{equation}
\CC^{\mu}_{\delta+\epsilon}(f) =
O\left(\IC^{\mu}_\delta(f)^2\cdot \poly\log \IC^{\mu}_\delta(f)/\epsilon^5\right), \label{eq:kol-strong}
\end{equation}
and concluded that  
\begin{equation}
\CC^{\mu}_{0.49}(f)
= O\left(\IC_{1/8}(f)^2\cdot \poly\log \IC_{1/8} (f)\right).
\label{eq:kol-weak}
\end{equation}
Kol's result~\eqref{eq:kol-strong} is incomparable to our
second result in terms of partition bound~\eqref{eq:prt}.

We consider a similar question in query complexity and show the following. 
\begin{theorem}\label{thm:qcquad} Let $g : \{0,1\}^n \rightarrow
  \{0,1\}$ be a function and $\mu$ be a bit-wise product distribution
  on $\{0,1\}^n$. Let $\qprt_{\ve}(g)$ be the query partition
  bound for $g$ with error $\ve$. Then, the distributional query
  complexity with error at most $0.49$ under the distribution $\mu$,
  denoted by $\QC^\mu_{0.49}(f)$, can be bounded above as follows:
$$\QC^\mu_{0.49}(g) = O\left(\left(\log \qprt_{1/8}(g) \cdot \log \log\qprt_{1/8}(g) \right)^2\right) . $$  
   \end{theorem}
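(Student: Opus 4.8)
The plan is to reduce the query-complexity statement to the communication-complexity statement of \cref{thm:ccquad}, or more precisely to the same machinery that underlies it. The standard dictionary between query complexity and communication complexity proceeds as follows: given $g : \{0,1\}^n \to \{0,1\}$ and a bit-wise product distribution $\mu = \mu_1 \times \cdots \times \mu_n$ on $\{0,1\}^n$, one would like to realize a query algorithm for $g$ as a communication protocol for a related function $F$ in which Alice and Bob together hold an input $x \in \{0,1\}^n$, one bit reading of $x_i$ costing essentially one round of communication. Concretely, I would use the ``index/addressing'' gadget: let Alice hold, for each coordinate $i$, a pair $(a_i^0, a_i^1) \in \{0,1\}^2$, let Bob hold a selector bit $b_i$, and set $x_i := a_i^{b_i}$; then $F\bigl((a_i^j)_{i,j}, (b_i)_i\bigr) := g(x)$. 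A query to coordinate $i$ of $x$ becomes a short exchange ($O(1)$ bits) in which Bob sends $b_i$ and Alice replies with $a_i^{b_i}$. Under this lift, a $\mu$-distributional query algorithm for $g$ making $q$ queries turns into a $\nu$-distributional protocol for $F$ using $O(q)$ communication, where $\nu$ is the product distribution on the lifted input space that induces $\mu$ on $x$ (choosing $b_i$ uniformly and $(a_i^0, a_i^1)$ from the appropriate product, so that $a_i^{b_i} \sim \mu_i$ and $\nu$ is genuinely a product distribution over Alice's and Bob's coordinates).

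The key step is then to relate $\prt_{1/8}(F)$, the communication partition bound of the lifted function, to $\qprt_{1/8}(g)$, the query partition bound of $g$. This is where the heavy lifting happens: one needs that the LP witnessing a good query partition bound for $g$ — a low-cost fractional covering of $\{0,1\}^n$ by subcubes, each labeled $0$ or $1$, that is nearly correct on average — lifts to an LP witnessing a good communication partition bound for $F$, namely a low-cost fractional covering of the communication input space by combinatorial rectangles. A subcube of $\{0,1\}^n$ fixing coordinates in a set $S \subseteq [n]$ pulls back, under the gadget, to a combinatorial rectangle in the lifted space (Alice's constraints on the relevant $a_i^j$'s, Bob's constraints on the relevant $b_i$'s), and the cost measure ($2^{\size}$ type weighting) transforms in a controlled, at most polynomial, way. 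Granting this, we get $\log \prt_{1/8}(F) = O(\log \qprt_{1/8}(g))$, hence $\log\log \prt_{1/8}(F) = O(\log\log \qprt_{1/8}(g))$ (up to additive constants). Feeding this into \eqref{eq:prt} of \cref{thm:ccquad} applied to $F$ under the product distribution $\nu$ yields
\begin{align*}
\CC^{\nu}_{0.49}(F) = O\left(\left(\log \qprt_{1/8}(g)\cdot \log\log \qprt_{1/8}(g)\right)^2\right).
\end{align*}

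Finally I would translate the protocol for $F$ back into a query algorithm for $g$. A $\nu$-distributional protocol for $F$ with $c$ bits of communication and error $0.49$ can be simulated by a $\mu$-distributional query algorithm for $g$: whenever the protocol's (public) transcript would require Alice or Bob to speak about coordinate $i$ — i.e., to reveal $b_i$ or $a_i^{b_i}$ — the query algorithm queries $x_i$, which supplies exactly the bit $a_i^{b_i}$ that Alice would have sent and determines the relevant part of Bob's message; since each ``block'' of the gadget costs $\Theta(1)$ bits, the number of queries is $O(c)$, and the error under $\mu$ is exactly the error under $\nu$ because the lift is distribution-preserving. Hence $\QC^\mu_{0.49}(g) = O(c)$, which is the claimed bound. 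The main obstacle, and the step I expect to require the most care, is the lifting of the partition-bound LP through the gadget with only polynomial loss — making sure the fractional cover of subcubes goes to a genuine fractional cover of rectangles, that every rectangle-label is consistent on its rectangle, and that the objective value is preserved up to the constant-factor blow-up coming from the gadget size; the rest (the input lift, the protocol-to-algorithm simulation, plugging into \cref{thm:ccquad}) is routine bookkeeping.
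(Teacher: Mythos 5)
Your reduction breaks down at the last step---translating the protocol for $F$ back into a decision tree for $g$---and this is not routine bookkeeping but precisely the hard direction of a simulation theorem. The forward pieces are fine: the preimage of a subcube $A$ under the index gadget is a disjoint union of $2^{|A|}$ rectangles (not a single rectangle, as you write, but harmlessly so---this is exactly why the factor $2^{|A|}$ appears in the query partition LP objective), so one does get $\prt_{\ve}(F)\leq \qprt_{\ve}(g)$, and \cref{thm:ccquad} then yields a deterministic protocol for $F$ with small communication and small error under $\nu$. But that protocol is an \emph{arbitrary} protocol on the lifted inputs; nothing forces its messages to ``speak about coordinate $i$'' one gadget block at a time. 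A $c$-bit protocol may send parities or other global functions of Alice's $2n$ bits, and there is no generic way to simulate such a protocol by a decision tree for $g$ making $O(c)$ queries. Converting an arbitrary low-communication protocol for a gadget-composed function into a shallow decision tree for the outer function is the content of deterministic simulation theorems (Raz--McKenzie, G\"o\"os--Pitassi--Watson); those require gadgets of size growing with $n$ and lose a factor of the gadget size, and no such theorem is available for the constant-size index gadget you use, let alone a distributional version under a specific product distribution. So the argument as written does not close.

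The paper avoids this entirely by proving \cref{thm:qcquad} directly in the query model, transplanting the communication argument rather than invoking it as a black box: after reducing the error of the query partition LP (\cref{claim:boosting}) and discarding subcubes of large support, it extracts a biased subcube of small support from the LP (\cref{cl:biasedcube}), queries that support, and recurses; the bit-wise product structure of $\mu$ enters via \cref{cl:eliminate}, which shows that subcubes whose supports are disjoint from the queried one contribute negligibly, so the residual LP solution lives on subcubes of strictly smaller support. If you want to keep your overall plan, the fix is to redo the induction of \cref{lem:induction} with subcubes in place of rectangles, as in \cref{claim:main}, rather than routing through the lifted function $F$.
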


A similar quadratic upper bound for query complexity for product
distributions in terms of approximate certificate complexity was obtained
by Smyth~\cite{Smyth2002}. His proof uses Reimer's inequality while our
proof technique is based on Nisan and Wigderson's~\cite{NisanW1995}
more elementary approach.

\subsection*{Organization}
The rest of the paper is devoted to the proofs of these two
theorems. The communication complexity result is proven in
\cref{sec:cc} while the query complexity result is proved
in \cref{sec:q}.

\section{Communication Complexity}
\label{sec:cc}

\subsection{Preliminaries}
We work in Yao's two-party communication
model~\cite{Yao1979} (see Kushilevitz and
Nisan~\cite{KushilevitzNisan} for an excellent introduction to the
area). Let $\X$, $\Y$ and $\Z$ be finite non-empty sets, and let $f:
\X \times \Y \to \Z$ be a function. A two-party protocol for computing
$f$ consists of two parties, {\alice} and {\bob}, who get inputs $x
\in \X$ and $y \in \Y$ respectively, and exchange messages in order to
compute $f(x,y) \in \Z$ (using shared randomness).

For a distribution $\mu$ on $\X \times \Y$, let the $\epsilon$-error
distributional communication complexity of $f$ under $\mu$ (denoted by
$\CC_{\epsilon}^{\mu}(f)$), be the number of bits communicated (for the
worst-case input) by the best deterministic protocol for $f$ with
average error at most $\epsilon$ under $\mu$.  Let $\CC^{{\text{\rm pub}}}_{\epsilon}(f)$, the public-coin randomized communication
complexity of $f$ with worst case error $\epsilon$, be the number of
bits communicated (for the worst-case input) by the best public-coin
randomized protocol that for each input $(x,y)$ computes $f(x,y)$
correctly with probability at least $1-\epsilon$. Randomized and
distributional complexity are related by the following special case of
von Neumann's minmax principle.
\begin{theorem}[Yao's minmax principle~\cite{Yao1983}]
\label{thm:yao}
$\CC^{\text{\rm pub}}_{\epsilon}(f) = \max_{\mu} \CC_{\epsilon}^{\mu}(f) $.
\end{theorem}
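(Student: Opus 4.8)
The plan is to prove the two inequalities separately, with $\max_\mu$ ranging over all distributions on $\X \times \Y$.

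The direction $\CC^{\text{\rm pub}}_\epsilon(f) \ge \max_\mu \CC^\mu_\epsilon(f)$ is routine: I would start from an optimal public-coin protocol with worst-case error at most $\epsilon$, note that for any fixed $\mu$ its error averaged over the inputs (and over its public randomness) is still at most $\epsilon$, and then fix the public coins to a value that does not increase this $\mu$-average error. This yields a deterministic protocol with the same communication cost and $\mu$-error at most $\epsilon$, so $\CC^\mu_\epsilon(f) \le \CC^{\text{\rm pub}}_\epsilon(f)$ for every $\mu$, and hence the maximum over $\mu$ is at most $\CC^{\text{\rm pub}}_\epsilon(f)$.

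For the reverse inequality I would appeal to von Neumann's minimax theorem. Set $c = \max_\mu \CC^\mu_\epsilon(f)$ and let $\mathcal{D}$ be the set of deterministic protocols that communicate at most $c$ bits on every input; since $\X, \Y, \Z$ are finite, $\mathcal{D}$ is finite. Consider the zero-sum game in which the minimizer picks $D \in \mathcal{D}$, the maximizer picks an input $(x,y) \in \X \times \Y$, and the payoff is $1$ if $D(x,y) \ne f(x,y)$ and $0$ otherwise. A mixed strategy of the maximizer is precisely a distribution $\mu$ on inputs, while a mixed strategy $p$ of the minimizer, viewed as choosing $D \sim p$ by a public coin, is precisely a public-coin protocol all of whose conditional protocols lie in $\mathcal{D}$ and hence of worst-case cost at most $c$. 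The minimax theorem (applicable because the game is finite) gives
$$\min_{p}\;\max_{(x,y)}\;\Pr_{D \sim p}\bigl[D(x,y) \ne f(x,y)\bigr] \;=\; \max_\mu\;\min_{D \in \mathcal{D}}\;\Pr_{(x,y) \sim \mu}\bigl[D(x,y) \ne f(x,y)\bigr].$$
By the definition of $c$, for every $\mu$ there is a deterministic protocol of cost at most $c$ whose $\mu$-average error is at most $\epsilon$, so the right-hand side is at most $\epsilon$; therefore the left-hand side is at most $\epsilon$, which means some $p$ gives a public-coin protocol of cost at most $c$ with worst-case error at most $\epsilon$. Thus $\CC^{\text{\rm pub}}_\epsilon(f) \le c = \max_\mu \CC^\mu_\epsilon(f)$.

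The step I would be most careful about is the setup that makes the minimax theorem legitimate: one must restrict to deterministic protocols of cost at most $c$ so that the game is finite, which is harmless since the conclusion we want is an upper bound of $c$ on the communication, and one must check that mixing deterministic protocols via a public coin does not increase the number of bits exchanged. I would also observe that every inequality above is non-strict, so no issue arises from the error being exactly $\epsilon$. Beyond averaging and the minimax theorem, no further machinery is required.
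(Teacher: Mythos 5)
Your proof is correct: the averaging/coin-fixing argument for the easy direction and the finite zero-sum game over deterministic protocols of cost at most $c$ for the minimax direction are both sound, and you have correctly handled the finiteness of $\mathcal{D}$ and the fact that mixing protocols by a public coin preserves the cost bound. The paper itself offers no proof of this statement --- it is quoted as a known result with a citation to Yao --- and your argument is exactly the standard one that the citation refers to.
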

We will prove \cref{thm:ccquad} by first showing an
upper bound on communication complexity in terms of the smooth
rectangle bound and then observing that the smooth rectangle bound is
bounded above by the partition bound.

\paragraph*{Smooth rectangle bound:} The smooth rectangle bound was
introduced by Jain and Klauck~\cite{JainK2010} as a generalization of
the rectangle bound. Just like the rectangle bound, the smooth
rectangle bound also provides a lower bound for randomized
communication complexity.  Informally, the smooth rectangle bound for
a function $f$ under a distribution $\mu$, is the maximum over all
functions $g$ , which are close to $f$ under the distribution $\mu$,
of the rectangle bound of $g$.  However, it will be more convenient
for us to work with the following linear programming
formulation. (See \cite[Lemma~2]{JainK2010} and
\cite[Lemma~6]{JainY2012} for the relations between the LP formulation
and the more ``natural'' formulation in terms of rectangle bound.) It
is evident from the LP formulation that the smooth rectangle bound is
a further relaxation of the partition bound (defined in the
appendix). We will formulate our results in terms of a distributional
version of the above smooth rectangle bound. For $\mu:\X\times \Y \to
\reals$ and any $z \in \Z$ and rectangle $R$, let $\mu_z(R) : = \mu(R
\cap f^{-1}(z))$ and $\mu_{\bar{z}}(R) := \mu(R) -
\mu_z(R)$. Furthermore, let $\mu_z : = \mu_z(\X \times \Y)$ and
$\mu_{\bar{z}} : = \mu_{\bar{z}}(\X \times \Y)$. The smooth rectangle
and its distributional version are defined below.

\begin{definition}[Smooth rectangle bound]\quad \\

\begin{itemize}
\item For a function $f: \X \times \Y \rightarrow \Z$ and $\ve \in (0,1)$, 
the $(\epsilon,\delta)$-smooth rectangle bound of $f$ denoted
$\srec_{\epsilon,\delta}(f)$ is defined to be
$\max\{\srec^z_{\epsilon,\delta}(f): z\in\Z\}$, where
$\srec^z_{\epsilon,\delta}(f)$ is the optimal value of the
following linear program.
\item For a distribution $\mu$ on $\X\times \Y$  and function $f:
\X \times \Y \to \Z$, the $(\epsilon,\delta)$-smooth
rectangle bound of $f$ with respect to $\mu$ denoted
$\srec^\mu_{\epsilon,\delta}(f)$ is defined to be
$\max\{\srec^{z,\mu}_{\epsilon,\delta}(f): z\in\Z\}$, where
$\srec^{z,\mu}_{\epsilon,\delta}(f)$ is the optimal value of
the following linear program.
\end{itemize}
\quad\\

{\scriptsize
\begin{minipage}{3in}
    \centerline{\underline{$\srec^z_{\eps,\delta}(f)$}}
    \begin{align*}
      \min \quad  \sum_{R }  w_{R} \\
      \sum_{R: (x,y) \in R} w_{R} & \geq 1 - \epsilon, && \forall (x,y) \in  f^{-1}(z)\\
      \sum_{R: (x,y) \in R} w_{R} &\leq \delta,&& \forall (x,y) \not\in f^{-1}(z)\\
      \sum_{R: (x,y) \in R} w_{R} &\leq 1,&& \forall (x,y)\\
      w_{R} &\geq 0, &&\forall R\enspace .
    \end{align*}
\end{minipage}
\begin{minipage}{3in}
    \centerline{\underline{$\srec^{z,\mu}_{\eps,\delta}(f)$}}
    \begin{align}
     \min \quad  \sum_{R }  w_{R} \nonumber\\ 
       \sum_{(x,y) \in  f^{-1}(z)} \mu_{x,y}\sum_{R: (x,y)
               \in R} w_{R} &\geq (1 - \epsilon)\cdot \mu_z \label{eq:covering}\\
      \sum_{R: (x,y) \in R}
        w_{R} &\leq \delta, &&  \forall (x,y) \not\in f^{-1}(z) \label{eq:packing}\\
      \sum_{R: (x,y) \in R} w_{R} &\leq 1,&& \forall (x,y) \label{eq:1bd}\\
       w_{R} &\geq 0, && \forall R \enspace .\nonumber
    \end{align}
\end{minipage}
}

We will refer to the constraint in \eqref{eq:covering} as the
covering constraint and the ones in \eqref{eq:packing} as the
packing constraints. Note that while there is a single covering
constraint (averaged over all the inputs $(x,y)$ that satisfy $f(x,y)
=z$) there are packing constraints corresponding to each $(x,y) \notin f^{-1}(z)$.
\end{definition}
Similar to Yao's minmax principle~\cref{thm:yao}, we have the
following proposition relating the distributional version of the
smooth rectangle bound to the smooth rectangle bound. 

\begin{proposition}
\label{prop:yao}
$\srec_{\epsilon,\delta}(f) = \max_{\mu} \srec_{\epsilon,\delta}^{\mu}(f) $.
\end{proposition}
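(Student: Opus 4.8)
The plan is to deduce $\srec_{\epsilon,\delta}(f) = \max_\mu \srec^\mu_{\epsilon,\delta}(f)$ from LP duality, exactly as Yao's minmax principle is itself a consequence of LP (or game-theoretic) duality. First I would fix the worst $z \in \Z$ and reduce to showing $\srec^z_{\epsilon,\delta}(f) = \max_\mu \srec^{z,\mu}_{\epsilon,\delta}(f)$; since both sides of the proposition are maxima over $z$, this single-$z$ statement suffices. The inequality $\srec^z_{\epsilon,\delta}(f) \ge \srec^{z,\mu}_{\epsilon,\delta}(f)$ for every $\mu$ is the easy direction: any feasible $\{w_R\}$ for the non-distributional LP satisfies the pointwise covering constraint $\sum_{R \ni (x,y)} w_R \ge 1-\epsilon$ on all of $f^{-1}(z)$, and averaging this against $\mu$ yields the single averaged covering constraint $\eqref{eq:covering}$; the packing constraints $\eqref{eq:packing}$ and the boundedness constraints $\eqref{eq:1bd}$ are identical in the two programs, so $\{w_R\}$ is feasible for $\srec^{z,\mu}_{\epsilon,\delta}(f)$ with the same objective value. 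Taking the max over $\mu$ preserves the inequality.

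For the reverse inequality, I would take the LP dual of $\srec^{z,\mu}_{\epsilon,\delta}(f)$ and interpret the dual variable attached to the single covering constraint as (essentially) a scalar, while the dual variables attached to the packing and boundedness constraints range over the inputs. The key structural observation is that the distributional program depends on $\mu$ only through the left-hand side of $\eqref{eq:covering}$, i.e. through the vector $(\mu_{x,y})_{(x,y) \in f^{-1}(z)}$ and the scalar $\mu_z$; and the function $\mu \mapsto \srec^{z,\mu}_{\epsilon,\delta}(f)$ is, via duality, a pointwise supremum of linear (hence convex) functionals of this data. I would then invoke an appropriate minimax theorem — either Sion's minimax theorem on the bilinear form obtained from the dual, or more directly the observation that optimizing over $\mu$ (a point in the simplex over $f^{-1}(z)$, after normalizing) and over the dual feasible region is a finite two-person zero-sum game — to exchange $\max_\mu$ with the inner optimization. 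Carrying this exchange out and re-dualizing shows that $\max_\mu \srec^{z,\mu}_{\epsilon,\delta}(f)$ equals the value of an LP whose covering constraints have been "disaggregated" back to one constraint per input $(x,y) \in f^{-1(z)}$ — which is precisely $\srec^z_{\epsilon,\delta}(f)$.

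The step I expect to be the main obstacle is handling the degenerate case $\mu_z = 0$ (or more generally inputs $(x,y) \in f^{-1}(z)$ with $\mu_{x,y} = 0$): when $\mu$ puts no mass on $f^{-1}(z)$, the covering constraint $\eqref{eq:covering}$ becomes vacuous and $\srec^{z,\mu}_{\epsilon,\delta}(f) = 0$, so such $\mu$ simply do not contribute to the max; one must make sure the minimax argument is applied over the (compact, convex) set of distributions and that the supremum over $\mu$ is actually attained, so that no boundary subtleties arise. A clean way to sidestep this is to note that the worst $\mu$ for a fixed $z$ can be taken supported on $f^{-1}(z) \cup \{(x_0,y_0)\}$ for a single cleverly-chosen bad input $(x_0,y_0) \notin f^{-1}(z)$, reducing the game to low dimension; alternatively one restricts attention to $\mu$ with $\mu_z > 0$ and argues by continuity. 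Once the attainment issue is dispatched, the remainder is a routine LP-duality computation, entirely parallel to the standard proof of \cref{thm:yao}, and I would present it at that level of detail.
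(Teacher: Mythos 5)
The paper states \cref{prop:yao} without proof (it is offered only as the analogue of \cref{thm:yao}), so there is no in-paper argument to compare against; your proposal supplies the standard and correct route. The easy direction is exactly as you say: averaging the pointwise covering constraints of $\srec^z_{\epsilon,\delta}(f)$ against $\mu$ restricted to $f^{-1}(z)$ yields \eqref{eq:covering}, while \eqref{eq:packing} and \eqref{eq:1bd} are unchanged. For the hard direction your minimax plan goes through, and can be streamlined: since the packing and boundedness constraints do not involve $\mu$ at all, $\srec^{z,\mu}_{\epsilon,\delta}(f)$ depends on $\mu$ only through the conditional distribution $\nu$ of $\mu$ on $f^{-1}(z)$ (and vanishes when $\mu_z=0$), so one may take the worst $\mu$ supported entirely on $f^{-1}(z)$; the auxiliary ``bad input $(x_0,y_0)\notin f^{-1}(z)$'' in your last paragraph is unnecessary. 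Then for each threshold $D$, letting $K_D$ be the (compact, convex) set of nonnegative $w$ satisfying \eqref{eq:packing}, \eqref{eq:1bd} and $\sum_R w_R\le D$, one has $\srec^z_{\epsilon,\delta}(f)\le D$ iff $\max_{w\in K_D}\min_\nu \sum_{(x,y)}\nu_{x,y}\sum_{R\ni(x,y)}w_R\ge 1-\epsilon$, and $\sup_\nu F(\nu)\le D$ iff the same holds with $\min_\nu$ and $\max_w$ exchanged; von Neumann's minimax theorem for this bilinear payoff finishes the argument without any explicit dualization or re-dualization, and also disposes of the attainment worries you raise. So the proposal is correct; the only caveat is that the dual-and-re-dualize phrasing is vaguer than it needs to be, and the degenerate cases you flag are non-issues once one observes the $\mu$-dependence is confined to the single covering constraint.
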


The main result of this section is the following
\begin{theorem}\label{thm:sreccc}
For any Boolean function $f: \{0,1\}^n\times\{0,1\}^n \to \{0,1\}$ and
any {\em product} distribution $\mu$ on $\{0,1\}^n\times\{0,1\}^n$, we
have the following.\changes
\begin{enumerate}
\item $$\CC^\mu_{0.49} (f) = O\left((\log \srec^{\mu}_{1/n^2,1/n^2}(f))^2 \cdot 
\log n\right).$$
\item Furthermore, if there exists $k \geq 20$ such that
\[
\lceil 100 \log
\srec^{\mu}_{\delta,\delta}(f) \rceil \leq k, 
\]
for $ \delta
\leq 1/(30\cdot 100 (k+1)^4)$, then $$\CC^{\mu}_{0.49}(f)  = O(k^2).$$
\end{enumerate}
\end{theorem}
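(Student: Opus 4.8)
The plan is to convert a near-optimal solution of the smooth rectangle linear program into a cheap distributional protocol, using in an essential way that $\mu=\mu_{\X}\times\mu_{\Y}$ is a product distribution. Fix optimal solutions $\{w^{0}_R\}$ and $\{w^{1}_R\}$ of the programs defining $\srec^{0,\mu}_{\delta,\delta}(f)$ and $\srec^{1,\mu}_{\delta,\delta}(f)$, each of value at most $S:=\srec^{\mu}_{\delta,\delta}(f)$ (with $\delta=1/n^2$ for the first part). Normalising, $\nu_z(R):=w^{z}_R/\sum_{R'}w^{z}_{R'}$ is a probability distribution over rectangles. Two consequences of the LP drive the argument: (i) by the ``$\leq 1$'' constraints, $\sum_R w^{z}_R\,\mu(R)=\sum_{(x,y)}\mu_{x,y}\sum_{R\ni(x,y)}w^{z}_R\leq 1$, so a $\nu_z$-random rectangle has small expected $\mu$-mass; and (ii) the covering constraint forces that, for all but an $\eps$-fraction (in $\mu$-measure restricted to $f^{-1}(z)$) of inputs $(x,y)$ with $f(x,y)=z$, the total $\nu_z$-weight of rectangles through $(x,y)$ is $\Omega(1/S)$, while the packing constraints prevent any input outside $f^{-1}(z)$ from being hit with total weight more than $\delta$. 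Thus $\nu_z$ behaves, up to the tiny slack $\delta$, like a fractional partition of $f^{-1}(z)$ into $\mu$-almost-$z$-monochromatic rectangles.

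The protocol is a density-increment/refinement argument. {\alice} and {\bob} maintain a sub-rectangle $\X'\times\Y'$ known to both and containing $(x,y)$, together with a guessed label $z$; initially $\X'\times\Y'=\X\times\Y$. In each round they use public coins to draw $R=A\times B$ from $\nu_z$ conditioned on $R$ meeting $\X'\times\Y'$ and re-weighted by $\mu\bigl(R\cap(\X'\times\Y')\bigr)$; {\alice} then sends a short description of where $x$ sits relative to $A$ within $\X'$, {\bob} sends a short description of where $y$ sits relative to $B$ within $\Y'$, and they replace $\X'\times\Y'$ by the indicated sub-rectangle. The invariant is that the conditional $\mu$-density of $f^{-1}(z)$ inside $\X'\times\Y'$ grows by a constant factor each round, except with probability $O(\delta)$ per round that the true input is dropped. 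Since this density starts at $\Omega(\mu_z/S)$ and never exceeds $1$, after $O(\log S)$ rounds $\X'\times\Y'$ is $(1-O(\delta\cdot\mathrm{poly}\log S))$-monochromatic of value $z$ under $\mu$; running the procedure once with $z=1$ and once with $z=0$, and outputting the label of whichever run terminates in a monochromatic rectangle consistent with both players' membership tests, is correct for all but a $0.49$-fraction of $(x,y)$ under $\mu$. In the second part, $\delta\leq 1/(3\cdot 10^4(k+1)^4)$ is small enough that one may first round $\nu_z$ to rectangles that are sub-cubes of co-dimension $O(\log S)=O(k)$, so each round's membership description costs $O(k)$ bits, for $O(k^2)$ total; with only $\delta=1/n^2$ (first part) the rounding costs an extra $O(\log n)$ factor in the per-round description length of $A$ within $\X'$, giving $O\bigl((\log S)^2\log n\bigr)$.

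The main obstacle is precisely the refinement round. Naively, having the two players agree on one of the $\sim S$ (or, after rounding and a union bound over the $\leq 4^n$ inputs, $\mathrm{poly}(S,n)$) rectangles through $(x,y)$ is an instance of set-intersection and costs $\Omega(S)$ bits; it is the product structure of $\mu$ that rescues us. The crux is to show that for a product $\mu$ the $\nu_z$-mass of the rectangles through a $\mu$-typical input factors coordinate-wise enough that a $\nu_z$-sample drawn conditioned on the current universe $\X'\times\Y'$ already restricts to a constant-factor-denser piece of $f^{-1}(z)$ with probability bounded below by an absolute constant (rather than $1/S$), so that $O(\log S)$ cheap rounds suffice. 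Establishing this density-increment step, and controlling the accumulation of the $\delta$-slack across the $O(\log S)$ rounds and the two parallel runs --- which is what forces $\delta$ to be polynomially smaller than $1/\log S$, hence the hypotheses $\delta=1/n^2$ and $\delta\leq 1/(3\cdot 10^4(k+1)^4)$ --- is the technical heart of the argument.

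Finally, to see that \cref{thm:sreccc} yields \cref{thm:ccquad}, one uses only the LP definitions: the smooth rectangle bound $\srec^{\mu}_{\eps,\delta}(f)$ is bounded above by the partition bound $\prt_{\eps}(f)$ (the partition LP is a restriction of the smooth rectangle LP), and $\log\srec^{\mu}_{\eps,\delta}(f)=O(\IC_{\eps}(f))$; plugging these into the two parts of \cref{thm:sreccc} (with $\eps=\delta$ set to $1/n^2$ in the first case and to a small inverse polynomial in $k\asymp\log\prt_{1/8}(f)$ or $k\asymp\IC_{1/8}(f)$ in the second) and adjusting the error parameters by the standard amplification/restriction arguments gives the bounds \eqref{eq:IC} and \eqref{eq:prt}. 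This last reduction is routine.
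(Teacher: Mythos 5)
Your write-up is a plan rather than a proof: you yourself flag that ``establishing this density-increment step \ldots is the technical heart of the argument,'' and that step is exactly what is missing. The covering constraint only guarantees that a $\nu_z$-random rectangle passes through a $\mu$-typical $z$-input with probability $\Omega(1/S)$, so a single public-coin sample lands on the input only with probability $1/S$, and coordinating which of the $\sim S$ rectangles through $(x,y)$ to refine to is the set-intersection obstacle you name but do not overcome. Your assertion that for product $\mu$ the success probability jumps from $1/S$ to a constant is unsubstantiated, and I see no reason it should hold: product structure of $\mu$ says nothing about how the LP weight concentrates on rectangles through a given input. There are also smaller signs of confusion (``rounding $\nu_z$ to rectangles that are sub-cubes of co-dimension $O(\log S)$'' imports query-complexity language into the two-party model, where a round of ``membership in $A\times B$'' costs $2$ bits, not $O(k)$), and the accounting that is supposed to produce the $\log n$ factor in part 1 versus the clean $O(k^2)$ in part 2 is never actually carried out.

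For contrast, the paper's argument is structurally different and avoids the sampling issue entirely. It works with $\srec^{0}+\srec^{1}$ and extracts, \emph{deterministically by averaging over the LP solution}, a single large $0$-biased rectangle $R^{(00)}=X_0\times Y_0$ of $\mu$-mass $\Omega(|\mu|/2^{s/100})$ (\cref{lem:bigrectangle}). Product structure enters only once, in \cref{lem:subproblem}, through the identity $\frac{\mu(R^{(01)}\cap R)}{\mu(R^{(01)})}\cdot\frac{\mu(R^{(10)}\cap R)}{\mu(R^{(10)})}=\frac{\mu(R^{(00)}\cap R)}{\mu(R^{(00)})}\cdot\frac{\mu(R^{(11)}\cap R)}{\mu(R^{(11)})}$, which forces one of $R^{(01)},R^{(10)}$ to admit a restricted LP solution of value $0.9D$. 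The protocol then branches three ways --- answer $0$ on $R^{(00)}$, recurse with $s\mapsto s-1$ on the improved quadrant, recurse with $t\mapsto t-1$ on the remaining half-space whose mass dropped by $1-\Omega(2^{-s})$ --- yielding a tree with at most $4\binom{s+t}{t}$ leaves (\cref{lem:induction}), which is finally \emph{balanced} via \cref{prop:balancing} to get depth $O(\log L)$. This two-parameter potential ($s$ and $t$) and the leaf-count-then-balance step are what actually produce the quantitative bounds in both parts of the theorem; your single-chain density increment has no analogue of either. As it stands, the proposal has a genuine gap at its central step and does not establish the theorem.
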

The above theorem is useful only when we have a upper bound on the
smooth rectangle bound for very small $\delta$. The following
proposition shows that such upper bounds for smooth rectangle bound for
such small $\delta$ can be obtained in terms of either the information
complexity or the partition bound. 

\begin{proposition}\label{prop:sr_rprt}
For any Boolean function $f: \{0,1\}^n\times\{0,1\}^n \to \{0,1\}$ and
any $\delta \in (0,1)$, we have the following bounds on $\srec_{\delta,\delta}(f)$. 
\begin{align*}
\log \srec_{\delta,\delta}(f) &\leq O\left(\log \frac1{\delta}\right)
\cdot \IC_{1/8}(f),\\
\log \srec_{\delta,\delta}(f) &\leq O\left(\log \frac1{\delta}\right)
\cdot \log \prt_{1/8}(f).
\end{align*}
\end{proposition}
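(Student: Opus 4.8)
The plan is to prove both inequalities by the same two-move strategy: (a) reduce the tiny-error smooth rectangle bound $\srec_{\delta,\delta}(f)$ to a \emph{constant}-error LP quantity, losing only a factor $O(\log\tfrac1\delta)$ in the exponent by error amplification, and (b) bound that constant-error quantity by $\log\prt_{1/8}(f)$ (directly, since the smooth rectangle bound is a relaxation of the partition bound) or by $\IC_{1/8}(f)$ (with the relaxed partition bound $\rprt$ as a bridge).

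For the partition-bound inequality I would first show $\srec_{\delta,\delta}(f)\le\prt_\delta(f)$. Let $z_0$ attain $\srec_{\delta,\delta}(f)=\srec^{z_0}_{\delta,\delta}(f)$ and let $\{w^*_{R,z}\}$ be an optimal partition-bound solution for $f$ with error $\delta$; then $w_R:=w^*_{R,z_0}$ is feasible for $\srec^{z_0}_{\delta,\delta}$ with no larger value, because the partition equality $\sum_{z,\,R\ni(x,y)}w^*_{R,z}=1$ together with $\sum_{R\ni(x,y)}w^*_{R,f(x,y)}\ge 1-\delta$ forces $\sum_{R\ni(x,y)}w^*_{R,z_0}$ to be $\le\delta$ when $f(x,y)\ne z_0$ (packing), $\ge 1-\delta$ when $f(x,y)=z_0$ (covering), and $\le 1$ always. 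It then suffices to prove $\prt_\delta(f)\le\prt_{1/8}(f)^{O(\log(1/\delta))}$. Take an optimal $\prt_{1/8}$ solution of value $W=\prt_{1/8}(f)$; by the partition equality it is, at every input $(x,y)$, a probability distribution over labelled rectangles $(R,z)$ containing $(x,y)$. Run $t=\Theta(\log\tfrac1\delta)$ independent copies: to a $t$-tuple $((R_1,z_1),\dots,(R_t,z_t))$ assign the labelled rectangle $\bigl(\bigcap_i R_i,\ \mathrm{maj}(z_1,\dots,z_t)\bigr)$ and add $\prod_i w^*_{R_i,z_i}$ to its weight. The total weight is $W^t$, the partition equality is preserved, and a Chernoff bound shows the majority label equals $f(x,y)$ with weight $\ge 1-\delta$. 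Combining, $\log\srec_{\delta,\delta}(f)\le O\bigl(\log\tfrac1\delta\bigr)\cdot\log\prt_{1/8}(f)$.

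For the information-complexity inequality I would route through the relaxed partition bound $\rprt$ of Kerenidis, Laplante, Lerays, Roland and Xiao. First, $\rprt$ dominates the smooth rectangle bound at comparable error (this is visible from the LP / zero-communication formulations), so $\srec_{\delta,\delta}(f)\le\rprt_{O(\delta)}(f)$. Second, $\rprt$ amplifies cleanly: running $\Theta(\log\tfrac1\delta)$ independent copies of the associated zero-communication (with-abort) protocols, aborting if any copy aborts and otherwise outputting the majority vote, drives the error from $1/8$ to $\delta$ while raising $\log\rprt$ by only a factor $O(\log\tfrac1\delta)$, so $\log\rprt_{O(\delta)}(f)\le O\bigl(\log\tfrac1\delta\bigr)\cdot\log\rprt_{1/8}(f)$. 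Third, $\log\rprt_{1/8}(f)=O(\IC_{1/8}(f))$, the bound of the same authors relating information complexity to zero-communication protocols. Chaining the three gives $\log\srec_{\delta,\delta}(f)\le O\bigl(\log\tfrac1\delta\bigr)\cdot\IC_{1/8}(f)$. (One could also route the partition-bound inequality through $\rprt$, since $\rprt_{1/8}(f)\le\prt_{1/8}(f)$, but the direct argument above avoids invoking $\srec\le\rprt$.)

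I expect the only genuinely non-elementary step to be $\log\rprt_{1/8}(f)=O(\IC_{1/8}(f))$ --- turning a bounded-information protocol into a small-value LP solution (equivalently, an efficient zero-communication protocol) is a one-shot ``compression'' statement proved by a correlated/importance-sampling argument, which I would cite rather than reprove. The rest is bookkeeping, but bookkeeping that requires care: checking that error amplification shrinks \emph{both} slots of $\srec_{\delta,\delta}$ (the covering error and the packing slack) simultaneously, that the ``$\le 1$'' row-constraints are preserved when passing to intersections of rectangles, and that the constants hidden in the various error parameters collapse into the single $O(\log\tfrac1\delta)$ factor.
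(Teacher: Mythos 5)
Your partition-bound inequality is correct and is essentially the paper's own route: slicing the optimal $\prt_\delta$ solution at the majority-relevant label $z_0$ gives $\srec_{\delta,\delta}(f)\le\prt_\delta(f)$ (the paper phrases this as the chain $\srec_{\delta,\delta}\le\rprt_\delta\le\prt_\delta$), and the $t$-fold product-with-majority-label amplification of $\prt$ is exactly \cref{claim:prt-err-red} (proved for the query analogue in \cref{claim:qprt-err-red}). Note that the partition \emph{equality} is what makes that product construction legitimate: at every input the weights form a genuine probability distribution over labelled rectangles containing it, so the $t$ copies are i.i.d., Chernoff applies, and the equality is preserved since $\bigl(\sum_{z}\sum_{R\ni(x,y)}w_{z,R}\bigr)^{t}=1$.

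Precisely this point is where your information-complexity argument has a gap. For the relaxed partition bound the per-input mass $m_{x,y}:=\sum_{z}\sum_{R\ni(x,y)}w_{z,R}$ is only constrained to lie in $[7/8,1]$, so after the $t$-fold product the total mass at $(x,y)$ is $m_{x,y}^{t}$, which can be as small as $(7/8)^{t}\to 0$. The covering constraint you must certify is the \emph{absolute} one, $\sum_{A\ni(x,y)}v_{f(x,y),A}\ge 1-\delta$, whereas the product solution only guarantees $m_{x,y}^{t}\cdot\left(1-e^{-\Omega(t)}\right)$, which is far below $1-\delta$ once $t=\Theta(\log\tfrac1\delta)$. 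The ``abort if any copy aborts'' protocol view does not rescue this: the non-abort probabilities become non-uniform across inputs by a factor up to $(8/7)^{t}$, and the translation in Kerenidis et al.\ from with-abort protocols back to LP solutions charges exactly this non-uniformity to the error, making the resulting bound vacuous; padding each input up to mass $1$ with correctly labelled singleton rectangles instead adds up to $\tfrac18\cdot 4^{n}$ to the objective. (The paper's own citation of G\"o\"os et al.\ --- error reduction fails for $\srec$ on partial functions --- is a warning sign that clean amplification below $\prt$ should not be assumed.) The paper avoids the issue by reordering your steps: it amplifies the information complexity itself, $\IC_\delta(f)\le O(\log\tfrac1\delta)\cdot\IC_{1/8}(f)$ via Braverman's prior-free characterization (\cref{claim:IC-err-red}), and only then applies the Kerenidis et al.\ bound $\log\rprt_\delta(f)=O(\IC_\delta(f))$ together with $\srec_{\delta,\delta}(f)\le\rprt_\delta(f)$. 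Replacing your $\rprt$-amplification step with that reordering repairs the proof.
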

(This proposition depends on the error-reduction properties of
information complexity and partition bound; a proof appears in
\cref{sec:ccapp}.)  Using this proposition, we can reduce the
error (i.e., $\delta$) to $1/n^2$ and show that $\CC^\mu_{0.49}(f) =
O\left(\left(\log \prt_{1/8}(f)\right)^2 \cdot \left(\log
n\right)^3\right)$. However, we can also reduce the error to $1/\poly
(\log \prt_{1/8}(f))$ and show that there exists a $k =
O\left(\log \prt_{1/8}(f) \cdot \log \log \prt_{1/8}(f)\right)$ that
satisfies the hypothesis for the second part of \cref{thm:sreccc}.
The bound~\eqref{eq:prt} in \cref{thm:ccquad} now follows by combining
\cref{prop:sr_rprt,prop:yao,thm:sreccc}. A similar argument yields the
bound \eqref{eq:IC}. 

In particular, the above discussion shows that our techniques apply to
any complexity measure (not necessarily partition bound and
information complexity) which can be used to bound the smooth
rectangle bound for very small $\delta$. An interesting question that
arises in this context is if we could bound smooth rectangle bound for
small $\delta$ in terms of smooth rectangle bound for large $\delta$,
say $\delta = 1/3$ (i.e., is error-reduction for $\srec$
feasible?). This question was answered in the negative for partial
functions by G\"o\"os~et al.~\cite{GoosLMWZ2015} who show that there
exists a partial function $f$ that has $\srec_{1/3}(f) = O(\log n)$
and yet $\srec_{1/4}(f) = \Omega(n)$.

\subsection{Proof of {\cref{thm:sreccc}}}

In this section, we construct a communication protocol tree with a
small number of leaves from the optimal solutions to the LPs
corresponding to $\srec^{0,\mu}_{\eps,\delta}$ and
$\srec^{1,\mu}_{\eps,\delta}$. The construction of the protocol tree
with a small number of leaves is inspired by a construction due to
Nisan and Wigderson, in the context of log-rank
conjecture~\cite[Theorem 2]{NisanW1995} (see also \cite[Combinatorial
proof of Theorem~2.11]{KushilevitzNisan}). Unlike the earlier
constructions, our protocol works for a distribution and allows for
error. As a result, the decomposition into sub-problems needs to be
performed more carefully. This step critically uses the product nature
of the distribution $\mu$.


The decomposition is accomplished using an inductive argument. We will
work with the quantity $\srec^0 + \srec^1$. That is, we will show that if
this sum is small, then there is a protocol with few leaves.  Suppose
$\srec^0 \leq \srec^1$. Since $\srec^0$ is small, we will conclude
that there is a large rectangle biased towards $0$
(see Lemma~\ref{lem:bigrectangle}). Based on this large rectangle, the
entire communication matrix is partitioned into three parts: (1) the
large biased rectangle itself, (2) a rectangle whose corresponding
sub-problem admits an LP solution leading to a smaller $\srec^1$ value
(the underlying product nature of the distribution $\mu$ is 
used here) and (3) a rectangle where the total measure with respect to
$\mu$ drops significantly (see \cref{lem:subproblem}). 

We say that a rectangle $R$ is $(1-\alpha)$-biased towards to $0$ if 
$\mu_1(R) \leq \alpha \mu_0(R)$.
\begin{lemma}[large biased rectangle]\label{lem:bigrectangle}
Let $\mu$ be a product distribution.  If
$\srec^{0,\mu}_{\epsilon,\delta} (f)\leq D$, then for every $\rho \in
(0,1)$ there exists a rectangle $S$ such that $S$ is $(1-\rho)$-biased
towards $0$ and $$\mu(S) \geq \mu_0(S) \geq \frac{1}{D}\cdot \left(
(1-\epsilon)\cdot \mu_0
- \left(\frac\delta\rho\right)\cdot \mu_{1}\right).$$
\end{lemma}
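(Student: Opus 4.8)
The plan is to take the optimal LP solution $\{w_R\}$ witnessing $\srec^{0,\mu}_{\epsilon,\delta}(f) \le D$ and use it to produce the desired biased rectangle by an averaging argument. Recall that this solution satisfies the covering constraint $\sum_{(x,y)\in f^{-1}(0)} \mu_{x,y}\sum_{R\ni(x,y)} w_R \ge (1-\epsilon)\mu_0$, the packing constraints $\sum_{R\ni(x,y)} w_R \le \delta$ for every $(x,y)\notin f^{-1}(0)$, the bound $\sum_{R\ni(x,y)} w_R \le 1$ everywhere, and $\sum_R w_R \le D$. The key observation is that the covering constraint can be rewritten as $\sum_R w_R \cdot \mu_0(R) \ge (1-\epsilon)\mu_0$, since swapping the order of summation gives $\sum_R w_R \sum_{(x,y)\in R\cap f^{-1}(0)}\mu_{x,y} = \sum_R w_R\,\mu_0(R)$. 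Similarly, summing the packing constraints against $\mu$ over the $1$-inputs yields $\sum_R w_R\,\mu_1(R) = \sum_{(x,y)\in f^{-1}(1)}\mu_{x,y}\sum_{R\ni(x,y)}w_R \le \delta\,\mu_1$.

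Next I would restrict attention to the ``good'' rectangles, namely those $R$ with $\mu_1(R) \le \rho\,\mu_0(R)$, i.e., the $(1-\rho)$-biased ones. Let $G$ be this set and $B$ its complement. For a bad rectangle $R\in B$ we have $\mu_0(R) < \frac1\rho\mu_1(R)$, so $\sum_{R\in B} w_R\,\mu_0(R) \le \frac1\rho\sum_{R\in B} w_R\,\mu_1(R) \le \frac1\rho\sum_R w_R\,\mu_1(R) \le \frac\delta\rho\mu_1$. Combining this with the rewritten covering constraint gives
\[
\sum_{R\in G} w_R\,\mu_0(R) \;\ge\; (1-\epsilon)\mu_0 - \frac\delta\rho\mu_1.
\]
Since $\sum_{R\in G} w_R \le \sum_R w_R \le D$, by averaging there must exist a single rectangle $S\in G$ with $\mu_0(S) \ge \frac1D\bigl((1-\epsilon)\mu_0 - \frac\delta\rho\mu_1\bigr)$. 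This $S$ is $(1-\rho)$-biased towards $0$ by membership in $G$, and $\mu(S) \ge \mu_0(S)$ trivially, which is exactly the claimed conclusion.

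I do not expect a genuine obstacle here — the statement is essentially an unpacking of LP duality-flavored bookkeeping, and the only subtlety is the clean interchange of summation order that converts the pointwise covering/packing constraints into weighted statements about $\mu_0(R)$ and $\mu_1(R)$. One minor point to be careful about: the averaging step requires $\sum_{R\in G} w_R > 0$, which holds whenever the right-hand side $(1-\epsilon)\mu_0 - \frac\delta\rho\mu_1$ is positive (if it is nonpositive the conclusion is vacuous, as $\mu_0(S)\ge 0$ for any rectangle). Notably, the product structure of $\mu$ is \emph{not} needed for this lemma — it will be invoked later, in \cref{lem:subproblem}, to control the sub-problems; here the argument is purely about the LP.
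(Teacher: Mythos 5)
Your proof is correct and follows essentially the same route as the paper's: split the rectangles into $(1-\rho)$-biased and unbiased ones, bound the unbiased contribution by $\frac{\delta}{\rho}\mu_1$ via the packing constraints, apply the covering constraint, and average against the weights $w_R$ (which sum to at most $D$) to extract a single large biased rectangle. Your side remarks are also accurate --- in particular, the product structure of $\mu$ is indeed not used here, only in \cref{lem:subproblem}.
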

(The proof appears in \cref{sec:deferredlemmas}.)
We will apply the above lemma with $\rho = \sqrt{\delta}$ and conclude
that there exists a large rectangle $S= X_0 \times Y_0$ that is
$(1-\sqrt{\delta})$-biased towards $0$. Let $X_1 = \X \setminus X_0$
and $Y_1 = \Y \setminus Y_0$. For $i,j \in \{0,1\}$, define rectangles
$R^{(ij)} := X_i \times Y_j$, $R^{(1*)} := X_1 \times \Y$, and
$R^{(*1)} := \X \times Y_1$. (Note, $S = R^{(00)}$.) For $i,j \in
\{0,1,*\}$, let $\mu^{(ij)}$ be
the restriction of $\mu$ to the rectangle $R^{(ij)}$. We show in the
lemma below that the function $f$ when restricted to either $R^{(10)}$
or $R^{(01)}$ has the property that the corresponding $\srec^1$ drops
by a constant factor. Define
\begin{align*}
\epsilon(f) &:= 1-\frac{\left(\sum_{(x,y) \in  f^{-1}(1)}
  \mu_{x,y}\sum_{R: (x,y)\in R} w_{R}\right)}{\mu_1}, \\
\epsilon^{(ij)}(f) &:= 1-\frac{\left(\sum_{(x,y) \in
                              f^{-1}(1)\cap R^{(ij)}}
  \mu_{x,y}\sum_{R: (x,y)\in R} w_{R}\right)}{\mu_1(R^{(ij)})};&
                                                                 \text{for }
                                                                 i, j
                                                                 \in \{0,1\}. 
\end{align*}
It follows from the covering constraint that $\epsilon(f) \leq
\epsilon$. Furthermore, $\epsilon(f)$ is an average of the $\epsilon^{(ij)}$'s
in the sense that $\epsilon(f) = \left(\sum_{i, j \in \{0,1\}}
\mu_1(R^{(ij)}) \epsilon^{(ij)}\right)/\mu_1$.

\begin{lemma}\label{lem:subproblem} Suppose the product
distribution $\mu$ and rectangles $R^{(ij)}$ are as above; in
particular, $R^{(00)}$ is $(1-\sqrt{\delta})$-biased towards $0$. There
exists an $(ij)\in \{(01),(10)\}$ such that one of the following
holds: (a) $2\mu^{(ij)}(f^{-1}(1)) \leq \mu^{(ij)}(f^{-1}(0))$ or (b)
$\srec^{1,\mu^{(ij)}}_{\epsilon^{(ij)}+30\sqrt[4]{\delta}, \delta}
(f) \leq 0.9 D$ where $\epsilon^{(ij)}$ is as defined above.
\end{lemma}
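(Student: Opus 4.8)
The plan is to take the optimal solution $\{w_R\}$ for the LP defining $\srec^{1,\mu}_{\epsilon(f),\delta}(f)$ (whose value is $\srec^1 \le D$, since we are in the case $\srec^0 \le \srec^1$) and to \emph{restrict} it to each of the four rectangles $R^{(ij)}$ for $i,j\in\{0,1\}$. For a fixed $(ij)$, the restricted solution is $\{w_R \cap R^{(ij)}\}$; because $\mu$ is a product distribution, the restriction of $\mu$ to $R^{(ij)}$ is again a product distribution, so each restricted solution is a feasible candidate for the LP defining $\srec^{1,\mu^{(ij)}}_{\epsilon^{(ij)},\delta}(f)$ — the packing and the $\le 1$ constraints are inherited verbatim, and by the very definition of $\epsilon^{(ij)}$ the covering constraint holds with error exactly $\epsilon^{(ij)}$. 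Hence $\srec^{1,\mu^{(ij)}}_{\epsilon^{(ij)},\delta}(f) \le \sum_{R}(w_R \text{ restricted weight on } R^{(ij)})$, and summing over the four blocks recovers (a sub-sum of) $\sum_R w_R \le D$. So at least one block carries LP-weight at most $D/4$; I would like this block to be $R^{(01)}$ or $R^{(10)}$.

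The obstacle, and the reason the statement is an \emph{either/or}, is that the small-weight block might be $R^{(00)}$ or $R^{(11)}$ rather than one of the two ``off-diagonal'' blocks we are allowed to recurse into. Block $R^{(00)} = S$ is the large biased rectangle we peel off directly, so weight there is harmless. Block $R^{(11)} = X_1\times Y_1$ is the genuinely problematic one. To handle it I would argue as follows: if $R^{(11)}$ carries a large share of the LP weight, I will instead charge that weight to one of $R^{(01)}$ or $R^{(10)}$ by noting $R^{(1*)} = R^{(10)}\sqcup R^{(11)}$ and $R^{(*1)} = R^{(01)}\sqcup R^{(11)}$, so the restricted solution on $R^{(10)}$ plus the one on $R^{(11)}$ is feasible for the sub-problem on $R^{(1*)}$, and likewise for $R^{(*1)}$; thus $\srec^{1,\mu^{(10)}}_{\cdots,\delta} + \srec^{1,\mu^{(01)}}_{\cdots,\delta}$ is at most (weight on $R^{(10)}$) $+$ (weight on $R^{(01)}$) $+ 2\cdot$(weight on $R^{(11)}$), which must still be bounded by roughly $2D$ unless a lot of mass has escaped. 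The quantitative bookkeeping — tracking where the weight $1-\epsilon(f)$ can hide among the four blocks, and using that $\mu(S) = \mu_0(S)\ge \mu(R^{(00)})$ is large (from Lemma~\ref{lem:bigrectangle}) while $\mu_1(S)$ is tiny because $S$ is $(1-\sqrt\delta)$-biased — is what forces the factor $0.9$ and the additive error blow-up $30\sqrt[4]{\delta}$ in the error parameter.

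Concretely the steps are: (1) fix the optimal LP solution for $\srec^{1,\mu}$ and write $\epsilon(f) = \sum_{i,j}\mu_1(R^{(ij)})\epsilon^{(ij)}/\mu_1$; (2) observe that if \emph{neither} alternative (a) nor (b) held for $(ij)\in\{(01),(10)\}$, then on both off-diagonal blocks the $1$-inputs outnumber the $0$-inputs by the factor in (a) \emph{and} the restricted covering error exceeds $\epsilon^{(ij)}+30\sqrt[4]\delta$ relative to LP-weight $>0.9D$ — i.e. the restricted solutions are ``expensive''; (3) combine the restricted solutions on $R^{(10)},R^{(11)}$ into a feasible solution for the $R^{(1*)}$ sub-problem (and symmetrically for $R^{(*1)}$), exploiting the product structure so that $\mu^{(1*)}$, $\mu^{(*1)}$ remain product; (4) bound the total weight of these combined solutions by $\sum_R w_R \le D$ plus a correction for double-counting $R^{(11)}$, and bound the leaked $\mu$-mass using the bias of $S$ and the size guarantee $\mu_0(S)\ge \tfrac1D((1-\epsilon)\mu_0 - \sqrt\delta\,\mu_1)$; (5) derive a contradiction with the assumed failure of both (a) and (b), the slack $0.1D$ and the error budget $30\sqrt[4]\delta$ being exactly what is needed to absorb the mass that genuinely disappears (which is then accounted for separately in the induction via the third part of the partition, item (3) in the outline before the lemma). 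I expect step (4)–(5), the simultaneous tracking of LP-weight and $\mu$-mass across the diagonal block $R^{(11)}$, to be the crux; everything else is routine restriction-of-LP bookkeeping.
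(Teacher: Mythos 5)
There is a genuine gap at the very first step of your plan, and it propagates through the rest. You assert that restricting the optimal solution $\{w_R\}$ to each block $R^{(ij)}$ yields four feasible sub-solutions whose objective values sum to (a sub-sum of) $\sum_R w_R \le D$, so some block ``carries LP-weight at most $D/4$.'' But the objective of the smooth-rectangle LP is the unweighted sum of the $w_R$'s: if you replace each $R$ by $R \cap R^{(ij)}$ and keep weight $w_R$ (which is what you must do to preserve the packing and $\le 1$ constraints, and to keep the covering error at $\epsilon^{(ij)}$), then \emph{each} of the four restricted solutions has objective essentially $\sum_R w_R$, i.e.\ up to $D$ apiece, for a total of up to $4D$. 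There is no additive decomposition of LP weight across the blocks --- a single rectangle $R$ meets all four blocks and pays its full $w_R$ in each --- so no block is forced to be cheap, and the subsequent ``charge $R^{(11)}$'s weight to $R^{(01)}$ or $R^{(10)}$'' bookkeeping in your steps (3)--(5) has nothing to work with. You also never identify where the product structure of $\mu$ actually enters; it is not (as you suggest) merely that $\mu^{(1*)}$ stays product under restriction.

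The paper's argument is structurally different. It does not restrict rectangles to blocks; it selects a \emph{sub-family of the original rectangles}, namely $B^{(01)} = \{R : \mu(R^{(01)}\cap R)/\mu(R^{(01)}) \ge 10\sqrt[4]{\delta}/D\}$ (and symmetrically $B^{(10)}$), keeps their original weights, and shows that discarding the rest costs only $10\sqrt[4]{\delta}\,\mu(R^{(01)}) \le 30\sqrt[4]{\delta}\,\mu_1(R^{(01)})$ in the covering constraint (alternative (a) exists precisely to guarantee $\mu(R^{(01)}) \le 3\mu_1(R^{(01)})$, not to handle a cheap diagonal block as you surmise). The crux, which your proposal is missing, is the product-measure identity
\[
\frac{\mu(R^{(01)} \cap R)}{\mu (R^{(01)})} \cdot \frac{\mu(R^{(10)} \cap R)}{\mu(R^{(10)})} \;=\; \frac{\mu(R^{(00)}\cap R)}{\mu (R^{(00)})} \cdot \frac{\mu(R^{(11)}\cap R)}{\mu (R^{(11)})},
\]
valid for every rectangle $R$ because $\mu$ is product. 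It implies that any rectangle heavy on both off-diagonal blocks must be heavy on $R^{(00)}$ or on $R^{(11)}$; the former is impossible in aggregate because $R^{(00)}$ is $(1-\sqrt{\delta})$-biased (packing forces $\sum_R w_R\,\mu(R^{(00)}\cap R)/\mu(R^{(00)}) \le 2\sqrt{\delta}$), and the latter contributes at most $0.1D$ of weight by the $\le 1$ constraints. Hence $B^{(01)}$ and $B^{(10)}$ cannot both have weight exceeding $0.9D$. Without this identity --- or some substitute mechanism for playing the two off-diagonal blocks against the two diagonal ones --- the lemma does not follow from restriction-of-LP bookkeeping alone.
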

We will prove this lemma in \cref{sec:deferredlemmas}.  Let us assume
the above lemmas and obtain the low cost communication protocol
claimed in \cref{thm:sreccc}.

Suppose $\mu^{(01)}$ satisfies $\srec^{1,\mu^{(01)}}_{\eps^{(01)} +
  30\sqrt[4]{\delta},\delta}(f) \leq 0.9D$ as given by the above
  lemma. Consider the decomposition of the space
  $\X\times \Y$ given by $(R^{(00)}, R^{(01)}, R^{(1*)} =
  R^{(10)} \cup R^{(11)})$. We note that $R^{(00)}$ is a large biased
  rectangle, $R^{(01)}$ has lower $\srec^1$ value while $R^{(1*)}$ has
  lower $\mu$ value (since $R^{(00)}$ is large) and its $\srec$ values
  are no larger than that of the entire space. In the case when
  $\mu^{(10)}$ satisfies $\srec^{1,\mu^{(10)}}_{\eps^{(10)} +
  30\sqrt[4]{\delta},\delta}(f) \leq 0.9D$, we similarly have the
  decomposition $(R^{(00)}, R^{(10)}, R^{(*1)} = R^{(01)} \cup
  R^{(11)})$.

 This suggests a natural inductive protocol $\Pi$ for $f$ that we
formalize in the lemma below.

For our induction it will be convenient to work with $\mu$
that are not necessarily normalized. So, we will only assume
$\mu: \X \times \Y \rightarrow [0,1]$ but not that $|\mu| := \mu(\X
\times \Y) =\sum_{(x,y) \in \X \times \Y}\mu(x,y)= 1$. For a protocol $\Pi$, let the advantage of $\Pi$ be defined by
\[ \adv_\mu(\Pi) = \sum_{(x,y): f(x,y) = \Pi(x,y)}  \mu(x,y)
 - \sum_{(x,y): f(x,y) \neq \Pi(x,y)} \mu(x,y). \] Let $L(\Pi)$ be the
number of leaves in $\Pi$.

%
We now formulate the induction hypothesis as follows.
\begin{lemma}\label{lem:induction}
Fix a function $f: \X \times \Y \rightarrow \{0,1\}$ and a product
distribution (not necessarily normalized)
$\mu: \X \times \Y \rightarrow [0,1]$ such that $|\mu| \geq 0$.
Let $\ve, \delta \in (0,1)$ and $\Delta \in (0,|\mu|)$.
Let $s, t$ be non-negative integers such that 
\begin{align*}
s \geq s(\mu,\epsilon,\delta) &:= \ceil{100 \cdot \log 
2(\srec^{0,\mu}_{\epsilon,\delta}(f) + \srec^{1,\mu}_{\epsilon,\delta}(f))};\\
t \geq t(\mu,\epsilon,\delta) &:= \ceil{100 \cdot 2^s\log (|\mu|/\Delta)}.
\end{align*}
Then, there is a protocol $\Pi$ such that 
\begin{align}
L(\Pi) & \leq 4{s + t \choose t} -1 ;\\
\adv_\mu(\Pi)& \geq \left(\frac{1}{10}- \epsilon - 30(s+1)
               \sqrt[4]{\delta}\right)|\mu| - \Delta \cdot L(\Pi).
\label{advbound}
\end{align}
\end{lemma}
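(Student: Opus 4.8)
The plan is to prove this by induction on $s+t$ (equivalently, on $D := \srec^{0,\mu}_{\epsilon,\delta}(f) + \srec^{1,\mu}_{\epsilon,\delta}(f)$ together with the ``budget'' parameter controlled by $t$), using the decomposition supplied by \cref{lem:bigrectangle} and \cref{lem:subproblem}. First I would dispose of the base cases. If $|\mu| = 0$ or $\Delta \geq |\mu|$ there is nothing to do; more importantly, if $D \leq 1/2$ (so that $s$ can be taken to be $0$), then one of $\srec^0, \srec^1$ essentially forces $f$ to be constant on almost all of the $\mu$-mass, and the constant protocol (one leaf) already achieves advantage at least $(1-2\epsilon)|\mu|$ or so, which beats the claimed bound since ${0+t \choose t} = 1$. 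Similarly when $s = 0$ but $D$ is slightly larger, the single biased rectangle from \cref{lem:bigrectangle} with $\rho = \sqrt\delta$ already covers a $(1-\epsilon - \text{small})$-fraction of the mass, giving a $2$-leaf (or few-leaf) protocol. The role of the base case is to absorb the additive $-1$ and the constant $4$ in $L(\Pi) \leq 4{s+t\choose t} - 1$.

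For the inductive step, assume without loss of generality (by symmetry of the two output values) that $\srec^0 \leq \srec^1$, so $\srec^0 \leq D/2$. Apply \cref{lem:bigrectangle} with $\rho = \sqrt\delta$ to get a rectangle $S = R^{(00)} = X_0\times Y_0$ that is $(1-\sqrt\delta)$-biased towards $0$ and has large $\mu_0$-mass, and form the rectangles $R^{(ij)}$ as in the text. Apply \cref{lem:subproblem} to obtain $(ij)\in\{(01),(10)\}$ with either (a) $2\mu^{(ij)}(f^{-1}(1)) \leq \mu^{(ij)}(f^{-1}(0))$ — in which case $R^{(ij)}$ is itself biased towards $0$ and needs only a constant-leaf sub-protocol — or (b) $\srec^{1,\mu^{(ij)}}_{\epsilon^{(ij)} + 30\sqrt[4]\delta,\delta}(f) \leq 0.9D$, so that the ``$s$-budget'' for the sub-rectangle $R^{(ij)}$ drops: one checks $s(\mu^{(ij)}, \epsilon^{(ij)} + 30\sqrt[4]\delta, \delta) \leq s - 1$ because $\log 2(\srec^0 + 0.9\srec^1) \leq \log(2D) - \Omega(1)$ when $\srec^1$ dominates, and $100\cdot\log(1/0.9) > 1$ wait — actually $100 \log(10/9) \approx 10.5 > 1$, so the ceiling drops by at least one. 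The decomposition is $(R^{(00)}, R^{(ij)}, R^{(1*)})$ or $(R^{(00)}, R^{(ij)}, R^{(*1)})$: the first piece is handled by a leaf (output $0$), the second by the induction hypothesis with parameter $s-1$ and the same $t$ (noting $t \geq t(\mu,\epsilon,\delta) \geq t(\mu^{(ij)}, \cdot, \cdot)$ since $|\mu^{(ij)}| \leq |\mu|$), and the third piece $R^{(1*)}$ has $|\mu^{(1*)}|$ strictly smaller — here I would feed it to the induction hypothesis with the same $s$ but with $t$ decreased by a fixed amount, using that $\log(|\mu^{(1*)}|/\Delta) \leq \log(|\mu|/\Delta) - c$ because $R^{(00)}$ carried off a $1/2^s$-fraction of the mass (this is exactly what the factor $100\cdot 2^s\log(|\mu|/\Delta)$ in $t(\mu,\epsilon,\delta)$ is calibrated for). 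Combining, $L(\Pi) \leq 1 + L(\Pi_1) + L(\Pi_2)$ where $\Pi_1$ uses $(s-1,t)$ and $\Pi_2$ uses $(s, t-\text{const})$, and the Pascal-type identity ${s-1+t\choose t} + {s+t-1\choose t-1} = {s+t\choose t}$ gives the leaf bound (with the constant $4$ and the $-1$ providing the slack to absorb the ``+1'' for the root leaf and the base-case overhead).

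Finally, for the advantage bound \eqref{advbound}, I would track additively: the leaf on $R^{(00)}$ contributes $\mu^{(00)}_0 - \mu^{(00)}_1 \geq (1 - 2\sqrt\delta)\mu(R^{(00)})$ (using $(1-\sqrt\delta)$-bias); the sub-protocol on $R^{(ij)}$ contributes at least $(\tfrac1{10} - \epsilon^{(ij)} - 30\sqrt[4]\delta - 30(s)\sqrt[4]\delta)|\mu^{(ij)}| - \Delta L(\Pi_1)$ by induction (the error parameter there is $\epsilon^{(ij)} + 30\sqrt[4]\delta$, which is why the $30(s+1)\sqrt[4]\delta$ term in the statement is one ``unit'' larger than $30 s\sqrt[4]\delta$); and the sub-protocol on the remaining piece contributes $(\tfrac1{10} - \epsilon - 30(s+1)\sqrt[4]\delta)|\mu^{(1*)}| - \Delta L(\Pi_2)$. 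The key arithmetic is that $\epsilon(f) = \sum_{i,j}\mu_1(R^{(ij)})\epsilon^{(ij)}/\mu_1 \leq \epsilon$ lets the weighted average of the error terms be controlled by $\epsilon|\mu|$; the $-\Delta L(\Pi)$ terms add up since $L(\Pi) = 1 + L(\Pi_1) + L(\Pi_2)$ (the extra $\Delta$ from the root leaf is harmless). The main obstacle — and the place the product structure of $\mu$ is indispensable — is verifying that in case (b) the \emph{restricted} problem on $R^{(ij)}$ genuinely inherits a feasible LP solution witnessing $\srec^{1,\mu^{(ij)}} \leq 0.9D$; this is precisely the content of \cref{lem:subproblem}, which I am assuming, but the bookkeeping of how its error parameter $\epsilon^{(ij)} + 30\sqrt[4]\delta$ threads through the induction (so that the accumulated error after $s$ levels is $\epsilon + 30(s+1)\sqrt[4]\delta$ rather than something growing geometrically) is the delicate part of the write-up.
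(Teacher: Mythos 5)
Your proposal follows essentially the same route as the paper's proof: induction on $s+t$, the three-way decomposition $(R^{(00)}, R^{(ij)}, R^{(1*)})$ obtained from \cref{lem:bigrectangle,lem:subproblem}, decrementing $s$ in case (b) and $t$ on the low-mass remainder piece, Pascal's identity for the leaf count, and additive advantage bookkeeping using $\epsilon(f)\leq\epsilon$. The only divergence is the $s=0$ base case, where the paper samples a rectangle with probability proportional to $w_R$ from the $\srec^{1,\mu}$ LP solution (whose objective value is at most $2$) to get a one-bit protocol with advantage $(\tfrac13-\epsilon-\delta)|\mu|$, rather than reusing the biased rectangle of \cref{lem:bigrectangle} (which alone only captures an $\Omega(1/D)$ fraction of the mass and would not suffice); your primary observation that an LP objective below $1-\epsilon$ forces $\mu_1=0$ does, however, cover the regime the induction actually bottoms out in.
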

\begin{remark}
Since $\ve \leq \frac{1}{2}$, our definitions imply that
$\srec^{1,\mu}_{\epsilon,\delta}(f)
+ \srec^{1,\mu}_{\epsilon,\delta}(f)) \geq \frac{1}{2}$; thus $s \geq 0$.
Similarly, since $\Delta \leq |\mu|$, we have $t \geq 0$.
\end{remark}

\begin{proof}
  First, we observe that if $\max\{\mu_0,\mu_1\} \geq
  2 \min\{\mu_0,\mu_1\}$, then the protocol $\Pi$ consisting of just
  one leaf, with the most popular value as label, meets the
  requirements: for, $\adv_\mu(\Pi) \geq \frac{1}{3}|\mu|$ and $L(\Pi)
  =1$, and our claim holds. Also, we may assume that $\epsilon -
  30(s+1) \sqrt[4]{\delta} < \frac{1}{10}$, for otherwise the claim is
  trivially true.

  We now proceed by induction on $s+t$, assuming
  that $\mu$ is balanced: $\max\{\mu_0,\mu_1\} \leq
  2 \min\{\mu_0,\mu_1\}$.

  \paragraph*{Base case $(s=0)$:} 
  Since $s=0$, we have
  $\log \srec^{1,\mu}_{\epsilon,\delta}(f) \leq \frac1{100}$. 
  We will show a protocol $\Pi$ where Alice sends one bit after which
  Bob announces the answer. Consider the optimal solution
  $\langle w_R: \mbox{$R$ a rectangle}\rangle$ to the LP corresponding
  to $\srec^{1,\mu}_{\epsilon,\delta}(f)$; thus,
  $\OPT := \sum_R w_R = \srec^{1,\mu}_{\epsilon,\delta}(f) \leq
  2^{1/100} \leq 2$.
  Let $R = R_X \times R_Y$ be a random rectangle picked with
  probability proportional to $w_R$ (using public coins).  In the
  protocol $\Pi$, Alice tells Bob if $x \in R_X$, and Bob returns the
  answer $1$ if $(x,y) \in R_Y$ and returns $0$ otherwise. Let
  $p_{xy} := \Pr_R[(x,y) \in R]$. Then, by \eqref{eq:covering} we have 
  $\sum_{(x,y) \in f^{-1}(1)} \mu(x,y)p_{xy} \geq (1 - \epsilon)
  \mu_1/\OPT$,
  and by \eqref{eq:packing}, we have
  $\sum_{(x,y) \in f^{-1}(0)} \mu(x,y) p_{xy} \leq \delta
  \mu_0/\OPT$. Thus,
\begin{align}
\E_R\left[\sum_{(x,y):\Pi(x,y) \neq f(x,y)} \mu(x,y)\right]  
&= \sum_{(x,y)\in f^{-1}(1)} \mu(x,y) (1-p_{xy}) +
                        \sum_{(x,y)\in f^{-1}(0)} \mu(x,y) p_{x,y} \nonumber\\
&\leq  \mu_1- (1-\epsilon)\mu_1/\OPT + \delta \mu_0/\OPT \nonumber \\
&\leq  \mu_1- ((1-\epsilon)\mu_1- \delta \mu_0) / \OPT \nonumber\\
&\leq  \frac{1}{2}(\mu_1+\epsilon \mu_1 + \delta \mu_0)\qquad (\mbox{since $\OPT \leq 2$}) \label{eq:expectation}.
\end{align}
Fix a choice $R$ for which the quantity under the expectation is at
most $\frac{1}{2}(\mu_1+\epsilon \mu_1 + \delta \mu_0)$. Then,
\begin{align*}
\adv(\Pi) &= |\mu| - 2 \sum_{(x,y):\Pi(x,y) \neq f(x,y)} \mu(x,y)\\
           &\geq |\mu| - (\mu_1 + \epsilon \mu_1 + \delta \mu_0)\\
           &\geq  \left(\frac{1}{3} - \epsilon - \delta\right) |\mu| & \mbox{(since $\mu_1 \leq 2 \mu_0$)}.
\end{align*}

\paragraph*{Base case $(t=0)$:} In this case, $|\mu| = \Delta$, and the 
protocol $\Pi$ with a single leaf that gives the most probable answer
achieves $\adv(\Pi) \geq 0 \geq |\mu| - \Delta$.

\paragraph*{Induction step:} We will use \cref{lem:bigrectangle} 
to decompose the communication matrix into a small number of
rectangles. After an exchange of a few bits to determine in which
rectangle the input lies, Alice and Bob will be left with a problem
for which $s$ or $t$ is significantly smaller.  Assume
$\srec_{\epsilon,\delta}^{1,\mu}(f) \geq 
\srec_{\epsilon, \delta}^{1,\mu}(f)$; 
in particular, $\srec_{\epsilon, \delta}^{1,\mu}(f) \leq 2^{s/100}$.

Formally, from \cref{lem:bigrectangle} (taking $\rho
= \sqrt{\delta}$), we obtain a rectangle $R^{(00)} = X_0 \times Y_0$
such that (a) $R^{(00)}$ is $(1-\sqrt{\delta})$-biased towards $0$,
and (b) $\mu(R^{(00)}) \geq \frac{1}{2^{s/100}} (1-\epsilon -
2\sqrt{\delta}) |\mu_0| \geq \frac{1}{3 \cdot 2^{s/100}} (1-\epsilon -
2\sqrt{\delta}) |\mu|$.  Recall the definitions of the rectangles
$R^{(10)}, R^{(01)}, R^{(11)}, R^{(1*)}, R^{(*1)}$ and the
corresponding restrictions of $\mu$, namely,
$\mu^{(01)}, \mu^{(10)}, \mu^{(11)}, \mu^{(1*)}, \mu^{(*1)}$. Suppose
the choice of $ij$ in \cref{lem:subproblem} for which one of the
alternatives holds is $ij =01$ (the other case $ij=10$ is symmetric).
The protocol $\Pi$ proceeds as follows. Alice starts by telling Bob if
$x \in X_0$.
\begin{description}
\item[Alice says $x \in X_0$.] Now, Bob tells Alice if $y \in Y_0$. 
\begin{description}
\item[Bob says $y \in Y_0$.] The protocol $\Pi^{(00)}$ in this case has 
one leaf with answer $0$; thus $\adv(\Pi^{(00}) \geq
|\mu^{(00)}| \cdot (1-\sqrt{\delta})$.

\item[Bob says $y \not \in Y_0$.] Alice and Bob follow the
protocol $\Pi^{(01)}$ promised by induction for $R^{(01)}$ under
$\mu^{(01)}$. To bound the number of leaves in $\Pi^{(01)}$, we will
consider the two alternatives ((a) and (b)) specified
in \cref{lem:subproblem} separately. First (alternative (a)) suppose
$2\mu^{(01)}(f^{-1}(1)) \leq \mu^{(01)}(f^{-1}(0))$; then we immediately
declare $0$ as the response, so that $L(\Pi^{(01)}) = 1$ and
$\adv(\Pi^{(01)}) \geq |\mu^{(01)}|/3$. If alternative (b)
holds, then we have
\begin{equation}
 \srec^{1,\mu^{(01)}}_{\epsilon^{(01)} + 30\sqrt[4]{\delta}, \delta}(f) \leq
   0.9 \srec^{1,\mu}_{\epsilon, \delta}(f). \label{eq:sreduction}
\end{equation} 
Then, we obtain $\Pi^{(01)}$ by induction. We take $\epsilon^{(01)} +
30\sqrt[4]{\delta}$ as $\epsilon$ (if this quantity is greater than
$1$, then we use a trivial protocol with one leaf and zero
advantage). With the reduction promised in \eqref{eq:sreduction}, we
may use a value of $s$ that is the old $s$ minus 1. Thus, we have
\begin{align*}
L(\Pi^{(01)}) & \leq 4{(s-1) + t \choose t} -1;\\
\adv(\Pi^{(01)}) & \geq 
      |\mu^{(01)}| \cdot\left (\frac{1}{10} - (\epsilon^{(01)} +
            30\sqrt[4]{\delta})- 30s\sqrt[4]{\delta}\right) - \Delta \cdot L(\Pi^{(01)}).
\end{align*}

\end{description}
\item[Alice says $x \not\in X_0$.]
Alice and Bob follow the protocol $\Pi^{(1*)}$ obtained by applying
the induction hypothesis to the rectangle $R^{(1*)}$ and the
associated distribution $\mu^{(1*)}$.  Observe that
\begin{equation}|\mu^{(1*)}| \leq |\mu| - \mu(R^{(00)}) \leq |\mu| \left(1-\frac{1}{3\cdot 2^{s/100}} (1-\epsilon -
2\sqrt{\delta})\right) \leq |\mu| \left(1-\frac1{4\cdot 2^{s}}\right).
\label{eq:mureduction}
\end{equation}  
For the last inequality we used $\epsilon +
2 \sqrt{\delta} \leq \frac{1}{10}$, for otherwise \cref{advbound}
holds trivially. Now, \cref{eq:mureduction} implies that 
$\log |\mu^{(1*)}| \leq \log |\mu| - \frac{1}{1002^s}$; so,
for our induction we may take $t \leftarrow t-1$. The parameters
$\ve$, $\delta$ and $\Delta$ remain the same. The original LP
solutions are still valid for the subproblem, so we use the same
$s$. The protocol $\Pi^{(1*)}$ obtained by induction satisfies the
following inequalities.
\begin{align*}
L(\Pi^{(1*)}) &\leq 4{s + (t-1) \choose t-1}  - 1;\\
\adv(\Pi^{(1*)}) &\geq |\mu^{(1*)}| \cdot \left(\frac{1}{10} - \epsilon^{(1*)} - 30(s+1)\sqrt[4]{\delta}\right)
        - \Delta \cdot L(\Pi^{(1*)}).
\end{align*}
\end{description}
Putting all the contributions together, we obtain
\begin{align*}
L({\Pi}) & =1  + L(\Pi^{(01)})+ L(\Pi^{(1*)})\\
&\leq 1  +  \left(4{(s-1) + t \choose t} - 1 \right) + \left(4{s + (t-1) \choose t-1}  - 1\right)\\
        &=  4{s + t \choose t} - 1;\\
\adv(\Pi) & \geq |\mu^{(00)}| \cdot (1-\sqrt{\delta}) \\
          &   +
      |\mu^{(01)}| \cdot\left (\frac{1}{10} - (\epsilon^{(01)} +
            30\sqrt[4]{\delta})- 30s\sqrt[4]{\delta}\right) - \Delta \cdot L(\Pi^{(01)})\\
      & +  |\mu^{(1*)}| \cdot \left(\frac{1}{10} - \epsilon^{(1*)} - 30(s+1)\sqrt[4]{\delta}\right)
        - \Delta \cdot L(\Pi^{(1*)}) \\
      &\geq \left(\frac{1}{10} - \epsilon - 30(s+1)
        \sqrt[4]{\delta}\right)|\mu| - \Delta \cdot L(\Pi).     \qedhere
\end{align*}
\end{proof}
The above lemma yields a protocol whose
protocol tree has a small number of leaves, but not necessarily small
depth. We can balance the protocol tree using the following proposition.
\begin{proposition}[{\cite[Lemma~2.8]{KushilevitzNisan}}]\label{prop:balancing}
If $f$ has a deterministic communication protocol
tree with $\ell$ leaves, then $f$ has a protocol tree with depth at most
$O(\log \ell)$.
\end{proposition}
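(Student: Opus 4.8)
The plan is to balance the tree by a centroid decomposition --- the standard argument behind this classical lemma. Let $T$ be a deterministic protocol tree for $f$ with $\ell$ leaves; we may assume $\ell$ exceeds a fixed small constant, as otherwise $T$ already has constant depth. Recall that for every node $v$ of $T$ the set of inputs whose computation in $T$ passes through $v$ is a combinatorial rectangle $R_v = A_v \times B_v$, where ``$x \in A_v$'' depends only on the bits Alice sent on the root-to-$v$ path (so Alice alone can test it) and ``$y \in B_v$'' depends only on Bob's bits (so Bob alone can test it).

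First I would locate a well-balanced node. For a node $v$ let $n(v)$ denote the number of leaves of $T$ below $v$, so $n(\mathrm{root}) = \ell$, and observe that the larger child of any internal node retains more than half of that node's leaves. Descending from the root, always to the child with the larger $n$-value, let $v$ be the first node reached with $n(v) \leq \tfrac{2}{3}\ell$; its parent $u$ has $n(u) > \tfrac{2}{3}\ell$ and $v$ is $u$'s larger child, so $n(v) > \tfrac{1}{2}n(u) > \tfrac{1}{3}\ell$. Hence $\tfrac{1}{3}\ell < n(v) \leq \tfrac{2}{3}\ell$; in particular $n(v) \geq 2$.

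Then I would define the balanced protocol recursively. Alice sends whether $x \in A_v$ and Bob sends whether $y \in B_v$ --- two bits that jointly reveal whether $(x,y) \in R_v$, i.e.\ whether $T$'s computation on $(x,y)$ reaches $v$. If it does, the parties run a balanced version of the subtree $T_v$ rooted at $v$, which is a valid protocol for $f$ on all of $R_v$ and has $n(v) \leq \tfrac{2}{3}\ell$ leaves. If it does not, they run a balanced version of the tree $T'$ obtained from $T$ by discarding everything below $v$ and turning $v$ into a dummy leaf; $T'$ is a valid protocol for $f$ on every input outside $R_v$ and has $\ell - n(v) + 1 \leq \tfrac{2}{3}\ell + 1$ leaves. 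If $D(\ell)$ is the least depth of a protocol with $\ell$ leaves, this yields $D(\ell) \leq 2 + D\!\left(\lceil \tfrac{2}{3}\ell \rceil + 1\right)$, which unwinds to $D(\ell) = O(\log \ell)$.

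The only thing needing care is the bookkeeping that keeps the recursion honest: that ``reaching $v$'' is genuinely a rectangle event, so that each probe bit is computable by a single party without further communication; that $T_v$ and $T'$ really are valid protocols for $f$ on $R_v$ and on its complement respectively (both inherited directly from $T$); and that $n(v) \geq 2$, so that $T'$ is strictly smaller than $T$. Granting these, the leaf bounds $\tfrac{2}{3}\ell$ and $\tfrac{2}{3}\ell + 1$ follow at once from $n(v) \in (\tfrac{1}{3}\ell, \tfrac{2}{3}\ell]$, and the geometric recursion closes.
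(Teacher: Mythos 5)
Your proof is correct and is precisely the standard centroid-decomposition argument behind the cited result (Kushilevitz--Nisan, Lemma~2.8), which the paper itself does not reprove. The only cosmetic quibble is that the larger child retains \emph{at least} (not necessarily more than) half of its parent's leaves, but $n(v) \geq \tfrac{1}{2}n(u) > \tfrac{1}{3}\ell$ still gives the needed bound, so nothing breaks.
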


We are now in a position to complete the proof of the main
theorem of this section.
\begin{proof}[Proof of {\cref{thm:sreccc}}]
To prove the first part of \cref{thm:sreccc}, we invoke \cref{lem:induction} with $\Delta = 1/2^{4n}$ and $\epsilon
= \delta = 1/n^2$ to derive a protocol tree $\Pi$ with at
most $$L(\Pi) = n^{{O\left(\log
      \srec^{1,\mu}_{1/n^2,1/n^2}(f)\right)}^2}$$ leaves and advantage at least
$1/20$. The first part now follows from \cref{prop:balancing}.

To prove the second part of \cref{thm:sreccc}, we
invoke \cref{lem:induction} with $s=k$, $\Delta = 1/2^{5k^2}$ and
$\epsilon = \delta = 1/(30\cdot 100(k+1)^4)$ where $k$ satisfies the
hypothesis. With this setting of parameters $t = \lceil 500\cdot 2^k
k^2
\rceil \leq 2^{2k}$ (for $k \geq 20$).  \cref{lem:induction} implies a
protocol tree $\Pi$ with at most
$$L(\Pi) \leq (t+s)^s \leq t^{2s} \leq 2^{4k^2}$$ leaves and advantage
at most $1/20$. The second claim then follows from \cref{prop:balancing}.
\end{proof}


\subsection{Proofs of {\cref{lem:bigrectangle,lem:subproblem}}}
\label{sec:deferredlemmas}

\begin{proof}[Proof of {\cref{lem:bigrectangle}}]
Fix $z \in \{0,1\}$. In the following we say that 
a rectangle $R$ is {\em biased} (towards $0$) if $\mu_{1}(R)
\leq \rho \cdot \mu_0(R)$; otherwise, we say it is unbiased. Fix a
solution $\langle w_R :  
R \text{ is a rectangle} \rangle$ that achieves the optimum $\srec_{\epsilon, \delta}^{0,\mu}(f) \leq D$. It
follows
\begin{align*}
\sum_{R:\text{unbiased}}  w_R \cdot \mu_0(R) &\leq \sum_{R: \text{unbiased}}
  w_R \cdot \frac{\mu_{1}(R)}{\rho} \\
&\leq \frac1\rho\cdot \sum_R w_R\cdot \mu_{1}(R)\\
&= \frac1\rho\sum_{(x,y) \in f^{-1}(1)} \mu(x,y) \sum_{R: (x,y)
  \in R} w_R \\
&\leq \frac{\delta}{\rho}\cdot \mu_{1},
\end{align*}
where the last inequality follows from the packing constraints
\eqref{eq:packing}.
We now use the covering constraints \eqref{eq:covering} to conclude that
\begin{align}
\sum_{R: \text{biased}} w_R \cdot \mu_0(R) = \sum_R w_R \cdot \mu_0(R) - \sum_{R:
  \text{unbiased}} w_r \cdot \mu_0(R) \geq (1-\epsilon) \cdot \mu_0 -
  \frac\delta\rho\cdot \mu_{1}.\label{eq:biaslarge}
\end{align}
Define a probability distribution on the rectangles $R$ as follows
$p(R) : =
w_R/\srec_{\epsilon,\delta}^{0,\mu}(f)$. Then \eqref{eq:biaslarge} can
be rewritten as
\begin{align*}
\E_R\left[ \I_{\text{biased}}(R) \cdot \mu_0(R) \right] \geq \frac1D
  \cdot \left( (1-\epsilon)\cdot \mu_0 - \frac\delta\rho\cdot \mu_{1}\right).
\end{align*}
Hence, there exists a large biased rectangle $S=X_0\times Y_0$ as claimed.
\end{proof}

\begin{proof}[Proof of {\cref{lem:subproblem}}]
Since $R^{(00)}$ is $(1-\sqrt{\delta})$-biased towards 0, we have from
the packing and covering constraints \eqref{eq:packing}
and \eqref{eq:1bd} that
\begin{align*}
\sum_{(x,y) \in R^{(00)}} \mu_{x,y} \sum_{R \ni (x,y)} w_R & =
                                                          \sum_{(x,y)
                                                          \in R^{(00)} \cap
                                                          f^{-1}(1)}
                                                          \mu_{x,y}
                                                          \sum_{R \ni
                                                          (x,y)} w_R + \sum_{(x,y)
                                                          \in R^{(00)} \cap
                                                          f^{-1}(0)}
                                                          \mu_{x,y}
                                                          \sum_{R \ni
                                                          (x,y)} w_R\\
&\leq \mu_1(R^{(00)}) + \delta \mu_0(R^{(00)}) \leq (\sqrt{\delta} + \delta)\mu_0(R^{(00)}) \leq
  2\sqrt{\delta} \mu(R^{(00)}).
\end{align*}
Hence, 
\begin{equation}
\sum_R w_R \left(\frac{\mu(R^{(00)} \cap R)}{\mu(R^{(00)}) }\right)  \leq                    2\sqrt{\delta}.\label{eq:SRbound} 
\end{equation} 
Group the rectangles in to subsets as follows:
\begin{align*}
B^{(01)} &:= \left\{ R : \frac{\mu(R^{(01)}\cap R)}{\mu(R^{(01)})} \geq
    \frac{10\sqrt[4]{\delta}}{D}\right\},&
B^{(10)} &:= \left\{ R : \frac{\mu(R^{(10)}\cap R)}{\mu(R^{(10)})} \geq
    \frac{10\sqrt[4]{\delta}}{D}\right\},\\
B &:= \left\{ R : \frac{\mu(R^{(11)}\cap R)}{\mu(R^{(11)})} \geq
    \frac{10}{D}\right\}.
\end{align*}

By \eqref{eq:1bd}, we have
\begin{align*}
\sum_{(x,y) \in R^{(11)}} \mu_{x,y} \sum_{R \ni (x,y)} w_R \leq \sum_{(x,y)
  \in R^{(11)}} \mu_{x,y} = \mu(R^{(11)}).
\end{align*}
Or equivalently,
\begin{align*}
\sum_R \frac{w_R}{D} \cdot \frac {\mu(R^{(11)} \cap R)}{\mu(R^{(11)})} \leq
  \frac1{D}.
\end{align*}
Hence,
\begin{equation}
\sum_{R \in B} w_R \leq 0.1 D. \label{eq:Bbound}
\end{equation}
We will now argue that either $\sum_{R \in B^{(01)}} w_R \leq 0.9 D$
or $\sum_{R \in B^{(10)}} w_R \leq 0.9 D$. Suppose, for contradiction,
that neither is true. Then, by \eqref{eq:Bbound} we have
\begin{align}
\sum_{R \in B^{(01)} \cap B^{(10)} \cap \Bar{B}} w_R \geq 0.7 D.\label{eq:B1B2Bbarbd}
\end{align}
Since $\mu$ is a product distribution we have
\begin{align*} %
\frac{\mu(R^{(01)} \cap R)}{\mu (R^{(01)})} \cdot \frac{\mu(R^{(10)} \cap R)}{\mu
  (R^{(10)})} = \frac{\mu(R^{(00)}\cap R)}{\mu (R^{(00)})} \cdot \frac{\mu(R^{(11)}\cap R)}{\mu (R^{(11)})}.
\end{align*}
Using the above we have
\begin{align*}
\sum_{R} w_R \left(\frac{\mu(R^{(00)} \cap R)}{\mu(R^{(00)}) }\right) 
&\geq \sum_{R\in B^{(01)} \cap B^{(10)} \cap \Bar{B}} w_R \left(\frac{\mu(R^{(00)} \cap R)}{\mu(R^{(00)}) }\right) 
\\
& \geq \left. \sum_{R\in B^{(01)} \cap B^{(10)} \cap \Bar{B}} w_R 
\left(\frac{\mu(R^{(01)} \cap R)}{\mu (R^{(01)})}\right) \cdot\left(\frac{\mu(R^{(10)} \cap R)}{\mu  (R^{(10)})}\right) \middle/ \left(
\frac{\mu(R^{(11)}\cap R)}{\mu (R^{(11)})}\right) \right.
\\
&\geq \left. \sum_{R\in B^{(01)} \cap B^{(10)} \cap \Bar{B}} w_R 
  \left(\frac{10\sqrt[4]{\delta}}{D}\right) \cdot
  \left(\frac{10\sqrt[4]{\delta}}{D}\right)  \middle/ \left(\frac{10}{D}\right) \right.\\
& \geq \frac{10\sqrt{\delta}}{D} \cdot (0.7 D)\\
& =  7\sqrt{\delta}.
\end{align*} %
This contradicts \eqref{eq:SRbound}. Hence, either $\sum_{R \in
B^{(01)}} w_R \leq 0.9 D$ or $\sum_{R \in B^{(10)}} w_R \leq 0.9
D$. Assume, wlog that $\sum_{R \in B^{(01)}} w_R \leq 0.9 D$. If $f$
is $1/2$-biased towards 0 with respect to the distribution
$\mu^{(01)}$, then the alternative (a) of the lemma holds, and we are
done.  Otherwise, that is $\mu_0(R^{(01)}) \leq 2\mu_1(R^{(01)})$ or
equivalently $\mu(R^{(01)}) \leq 3\mu_1^{(01)}(R^{(01)})$. We will
infer from this that $\srec^{1,\mu^{(01)}}_{\epsilon^{(01)}
+30\sqrt[4]{\delta},\delta}(f) \leq 0.9 D$.  Consider the primal
solution given by
\begin{align*}
w'_R = \begin{cases}
  w_R, & \text{if } R \in B^{(01)}\\
0, & \text{if } R \notin B^{(01)}.
\end{cases}
\end{align*}
Clearly, $w'_R$, being a part of the original solution,
satisfies \eqref{eq:packing} and
\eqref{eq:1bd}, and has objective value at most $0.9 D$. All we need
to show is that it satisfies the covering constraint~\eqref{eq:covering}. For this, we first
consider
\begin{align}
\sum_{R \in \Bar{B^{(01)}}} w_R \left( \frac{\mu_1(R^{(01)} \cap R)}{\mu(R^{(01)})}\right)
  \leq \sum_{R \in \Bar{B^{(01)}}} w_R \left( \frac{\mu(R^{(01)} \cap
  R)}{\mu(R^{(01)})} \right)\leq \frac{10\sqrt[4]{\delta}}{D}\cdot D \leq 10\sqrt[4]{\delta}. \label{eq:bbarbd}
\end{align}
Now,
\begin{align*}
&\sum_{(x,y) \in f^{-1}(1) \cap R^{(01)}} \mu_{x,y} \sum_{R \in (x,y)} w'_R \\
& = \sum_{(x,y) \in f^{-1}(1) \cap R^{(01)}} \mu_{x,y} \sum_{R \in (x,y), R
  \in B^{(01)}} w_R \\
& = \sum_{(x,y) \in f^{-1}(1) \cap R^{(01)}} \mu_{x,y} \left( \sum_{R \in
  (x,y)} w_R - \sum_{R \in (x,y) , R \notin B^{(01)}} w_R \right)\\
& =  (1-\epsilon^{(01)}) \mu_1(R^{(01)}) - \sum_{(x,y) \in f^{-1}(1) \cap R^{(01)}}
  \mu_{x,y} \sum_{R \in (x,y) , R \notin B^{(01)}} w_R \\
& = (1-\epsilon^{(01)}) \mu_1(R^{(01)}) - \sum_{R \notin B^{(01)}} w_R \mu_1(R^{(01)}
  \cap R)\\
& \geq (1-\epsilon^{(01)}) \mu_1(R^{(01)}) - 10\sqrt[4]{\delta} \mu(R^{(01)}) &
                                                             \text{[From
                                                             \eqref{eq:bbarbd}]}\\ 
& \geq (1-\epsilon^{(01)}) \mu_1(R^{(01)}) - 30\sqrt[4]{\delta} \mu_1(R^{(01)})&
                                                                  \text{[Since
                                                                  $\mu(R^{(01)})
  \leq 3\mu_1(R^{(01)})$]}\\
& = (1-\epsilon^{(01)} -30\sqrt[4]{\delta}) \mu_1(R^{(01)})
\end{align*}
Thus, \eqref{eq:covering} holds for $R^{(01)}$ with $\epsilon$
replaced by $\epsilon^{(01)} + 30\sqrt[4]{\delta}$.
\end{proof}

\section{Query Complexity}
\label{sec:q}
Let $f:\{0,1\}^n \rightarrow \{0,1\}$ be the function
for which we wish to build a decision tree.  
\subsection{Preliminaries}
\begin{definition}[product distribution]
We say $\mu: \{0,1\}^n \rightarrow [0,1]$ is a (bit-wise) product distribution
on $\{0,1\}^n$ if for $i=1,2, \ldots n$, there exist $p_i(0),p_i(1) \in [0,1]$ 
(satisfying $p_i(0) + p_i(1)=1$) such that for all $x \in \{0,1\}^n$,
$\mu(x)=\prod_i p_i(x_i)$.
\end{definition}
Let $\mu$ be a bit-wise product distribution on the inputs to $f$.
Our goal is to build an efficient decision tree $\cT$ for $f$ such
that $\Pr_\mu[f(x) \neq \cT(x)]$ is small. As the input bits are
queried, the input is restricted to reside in a subcube. Let
$s \in \{0,1,\star\}^n$. The subcube of $\{0,1\}^n$ with support $s$
is
\[ \subcube(s):= \{x \in \{0,1\}^n: s_i \neq \star 
                                    \Rightarrow x_i =s_i\}.\] 
Its size
is $\size(s) := |\{i: s_i \in \{0,1\}\}|.$ If $A$ is a subcube, say
$A=\subcube(s)$, then  $|A|:=\size(s)$ .

We will derive an efficient decision tree in terms of the {\em query
partition bound} of $f$ due to Jain and Klauck~\cite{JainK2010}. The
relationship of the query partition bound with other LP based bounds is
described in \cref{sec:qcapp}.
\begin{definition}[query partition bound~\cite{JainK2010}]
Let $\ve > 0$.  In the following, $A$ represents a subcube of
$\{0,1\}^n$.  The $\ve$-query partition bound of
$f:\{0,1\}^n \rightarrow \{0,1\}$, denoted $\qprt_\ve(f)$, is the
optimal value of the following linear program.
\begin{align*}
  {\min}\quad  \sum_{z} \sum_{A}  w_{z,A} \cdot 2^{|A|} \\
      \sum_{A: x \in A} w_{f(x),A} &\geq 1 - \epsilon,
   && \forall x \in \{0,1\}^n\\
     \sum_{A: x \in A} \quad \sum_{z}  w_{z,A} &= 1,  && \forall x\in \{0,1\}^n  \\
      w_{z,A} &\geq 0, && \forall (z,A) \enspace\\
         \end{align*}
\end{definition}

 
Using the following claim (see appendix) one can ensure that the error
parameter $\ve$ above is small with a modest increase in the query
partition bound.
\begin{claim}
Let $\ve > \delta >0$. Then $\log \qprt_\delta(f) \leq \left(\frac{2}{(0.5 - \ve)^2}\log\frac{1}{\delta} \right) \log \qprt_\ve(f) .$
\label{claim:boosting}
\end{claim}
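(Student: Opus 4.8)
The plan is to reduce error in the query partition bound by repeating an optimal solution for the larger error $\ve$ several times, in much the same spirit as standard probability amplification by majority vote. Specifically, suppose $\langle w_{z,A}\rangle$ is an optimal solution to the LP defining $\qprt_\ve(f)$, so that $\sum_{A\ni x} w_{f(x),A}\ge 1-\ve$ and $\sum_{A\ni x}\sum_z w_{z,A}=1$ for every $x$; think of this as a distribution over labelled subcubes $(z,A)$, from which, conditioned on $x\in A$, the draw is ``correct'' (i.e. $z=f(x)$) with probability at least $1-\ve$. First I would take $k$ independent copies of this distribution and form the product solution: a new labelled object is a $k$-tuple $((z_1,A_1),\dots,(z_k,A_k))$, its subcube is $A_1\cap\cdots\cap A_k$, its weight is $\prod_i w_{z_i,A_i}$, and its label is the majority value among $z_1,\dots,z_k$ (breaking ties arbitrarily). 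One then merges all tuples that yield the same $(\text{label},\text{subcube})$ pair to land back in the LP's variable space.

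The key steps, in order, are: (i) check that this product construction is a feasible solution to the error-$\delta$ LP for an appropriate $k$; (ii) bound its objective value. For feasibility, the normalization constraint $\sum_{A\ni x}\sum_z w_{z,A}=1$ is automatically preserved by taking products (the $k$-fold product of probability distributions is a probability distribution, and every tuple containing $x$ in its intersection subcube corresponds to each coordinate containing $x$). For the covering constraint, fix $x$; in each coordinate, conditioned on $x\in A_i$, we have $z_i=f(x)$ with probability at least $1-\ve$ independently, so by a Chernoff/Hoeffding bound the majority equals $f(x)$ except with probability at most $e^{-2(0.5-\ve)^2 k}$ (using that a majority fails only if at least a $1/2$-fraction of coordinates are wrong, i.e. the empirical mean of the ``wrong'' indicators exceeds its expectation by at least $0.5-\ve$). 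Choosing $k=\big\lceil \tfrac{1}{2(0.5-\ve)^2}\log\tfrac1\delta\big\rceil$ makes this at most $\delta$, giving the covering constraint with error $\delta$. For the objective, each surviving subcube is an intersection $A_1\cap\cdots\cap A_k$, hence $2^{|A_1\cap\cdots\cap A_k|}\le 2^{|A_1|}\cdots 2^{|A_k|}$ (sizes of the fixed-coordinate sets only add when intersecting, and may drop if coordinates conflict), so the total objective is at most $\big(\sum_{z,A} w_{z,A} 2^{|A|}\big)^k=\qprt_\ve(f)^k$. Taking logs, $\log\qprt_\delta(f)\le k\log\qprt_\ve(f)\le \big(\tfrac{1}{2(0.5-\ve)^2}\log\tfrac1\delta +1\big)\log\qprt_\ve(f)$, and absorbing the $+1$ into the constant (or noting $\log\qprt_\ve(f)\ge 1$ for nonconstant $f$) yields the stated bound with the factor $\tfrac{2}{(0.5-\ve)^2}$.

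I expect the main technical nuisance to be the bookkeeping around the objective: one must be careful that merging distinct tuples into the same $(z,A)$ variable only \emph{decreases} the weight assigned relative to the naive sum, and that the subcube produced by an intersection is indeed $\subcube(s)$ for some $s\in\{0,1,\star\}^n$ (which it is — intersecting subcubes either yields a subcube or the empty set, and empty subcubes can simply be dropped since they cover no $x$). A secondary subtlety is the tie-breaking in the majority label and making sure the Chernoff bound is applied to the right event (``at least half wrong'' rather than ``strictly more than half''); using the one-sided Hoeffding inequality $\Pr[\bar Y \ge \E \bar Y + t]\le e^{-2t^2 k}$ with $t=0.5-\ve$ handles this cleanly. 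None of these obstacles is deep; the argument is essentially amplification transported into the LP formalism, exactly paralleling how error reduction works for the communication partition bound.
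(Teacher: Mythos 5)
Your proposal is correct and follows essentially the same route as the paper's proof: take the $t$-fold product of an optimal solution, label each tuple by majority, merge tuples by their intersection subcube, verify the covering constraint by a Chernoff bound and the normalization exactly, and bound the objective by $\qprt_\ve(f)^t$ via $2^{|\bigcap_i A_i|}\le\prod_i 2^{|A_i|}$. The minor points you flag (ties, empty intersections, the choice of $t$) are handled the same way in the paper, which simply takes $t$ odd.
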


\begin{definition}
 We say $\mu$ is an $(\alpha_0, \beta_0, \alpha_1,\beta_1, a,
 b)$-feasible distribution for $f$ if there exists a feasible solution to the
following inequalities. The variables are 
$$(u_R: R \text{ is a subcube with support of size at most } a)$$ 
and   
$$(w_R: R \text{ is a subcube
with support of size at most } b). $$

{
  \begin{minipage}{2.5in}
    \begin{align}
      \sum_{R: x \in R} u_R   & \geq 1-\alpha_0 & \forall x \in f^{-1}(0) \label{eq:zero1}\\
      \sum_{R: x \in R} u_R & \leq \beta_0  & \forall x \in f^{-1}(1) ;\label{eq:zero2}\\
      u_R &\geq 0  . \label{eq:zero3}
    \end{align}
  \end{minipage}\hspace{0.5in}
  \begin{minipage}{3.5in}
    \begin{align}
      \sum_R \mu_1(R)w_R   &\geq (1-\alpha_1) \mu_1 \label{eq:one1}\\
      \sum_{R: x \in R} w_R &\leq 1 & \forall x \in \{0,1\}^n; \label{eq:one1.5}\\
      \sum_{R: x \in R} w_R & \leq \beta_1  & \forall x \in f^{-1}(0) ; \label{eq:one2}\\
      w_R &\geq 0.
    \end{align}
  \end{minipage}
}

\end{definition}
\begin{remark}
If the $i$-th bit of the input is fixed in $\mu$ (that is,
$p_i(0) \in \{0,1\}$), then we will assume $i$ is not part of the
support of any $R$ in the above feasible solution.
\end{remark}

The main technical contribution of this section is the following.
\begin{lemma}
Let $\delta>0$ and let $\mu$ be a product distribution that is $(\alpha_0, \beta_0, \alpha_1,\beta_1, a,
b)$-feasible for $f$. Then, there is a decision tree for
$f$ of depth at most $ab$ that errs with probability at most
$$ \frac{1}{4} +  \alpha_1 + \beta_1 + 4b(\beta_1  + \delta) +
\frac{\beta_0}{(1-\alpha_0)\delta}.$$
\label{claim:main}
\end{lemma}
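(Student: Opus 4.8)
The goal is to convert a feasible solution of the LP-style inequalities into a short, low-error decision tree. The $u_R$ system is a fractional ``cover by subcubes biased to $0$'' certificate (of width $\le a$), and the $w_R$ system is a fractional ``packing of $1$-inputs'' certificate (of width $\le b$). The plan is to build the tree by recursing on the $w_R$ system: at each node, pick a subcube $R$ at random with probability proportional to $w_R$ (using public randomness, then fixing a good choice at the end), query the $\le b$ coordinates in the support of $R$, and branch. On the branch that agrees with $R$ we ``declare $1$'' tentatively; on branches that disagree we have eliminated a constant fraction of the $\mu_1$-mass of $1$-inputs (because $R$ was chosen proportional to $w_R$ and the packing constraint \eqref{eq:one2} keeps the $0$-inputs from being hit too often), so after $O(b)$ rounds all but a $\delta$-fraction of the $1$-mass has been ``caught''. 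Once we stop this phase, we have spent depth $O(b^2)$... but we must be more careful: the lemma claims depth $ab$, so the recursion on the $w_R$ side should have depth roughly $a$ (number of rounds) times $b$ (queries per round), and the $u_R$ certificate is what lets us terminate after $\approx a$ rounds — at a leaf where we have queried $a$ coordinates, the surviving subcube is small enough that the $u_R$-cover forces $f$ to be (almost) constantly $0$ there, so we answer $0$.

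\textbf{Key steps, in order.} (1) Set up the random-subcube process driven by $w_R$: define $p_x = \Pr[x \in R]$ for the random $R$; from \eqref{eq:one1} the expected $\mu_1$-mass of $1$-inputs lying in $R$ is $\ge (1-\alpha_1)\mu_1$, and from \eqref{eq:one1.5}–\eqref{eq:one2} the $0$-inputs are hit with probability $\le \beta_1$ each. (2) Recurse: after querying the support of $R$, on the ``inside'' branch declare $1$ and stop; on each ``outside'' branch, restrict $\mu$ and $f$ to that subcube and observe the restricted instance is still feasible with slightly degraded parameters (here one uses that $\mu$ is a bit-wise product distribution, so conditioning on coordinate values keeps it product and scales the $\mu_1$-mass cleanly). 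Bound the error introduced per round by $O(\beta_1 + \delta)$, accounting for the $\beta_1$-fraction of $0$-mass wrongly captured and a $\delta$ ``give-up'' slack; over $\le 4b$ or so rounds this is the $4b(\beta_1+\delta)$ term. (3) Termination via the $u_R$ system: argue that once a branch has survived long enough (depth roughly $a$, i.e. $a$ coordinates fixed), the $u_R$-cover of width $\le a$ restricted to that subcube essentially collapses — the cover constraint \eqref{eq:zero1} plus the $\beta_0$ bound \eqref{eq:zero2} show that on the surviving subcube $f$ is $0$ except for a mass $\le \beta_0/((1-\alpha_0)\delta)$, so declaring $0$ there is cheap; this is where that last error term comes from. (4) Add the baseline $1/4 + \alpha_1 + \beta_1$ from the ``inside'' branches and the residual uncovered $\mu_1$-mass, fix the public randomness to a good outcome (expected error $\le$ claimed bound implies some fixing achieves it), and bound the depth: $\le a$ rounds $\times \le b$ queries per round $= ab$.

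\textbf{Where the real work is.} The main obstacle is step (2)–(3): making the induction bookkeeping consistent so that the depth is genuinely $ab$ and not $\Theta(a^2 b)$ or $\Theta(ab^2)$. One has to argue that the ``width-$a$'' $u_R$ certificate degrades gracefully under restriction and that it forces termination after only $\approx a$ levels of the $w_R$-recursion — i.e. the two certificates must be coupled so that each round of the $w_R$-process consumes one unit of ``budget'' from the $u_R$-side. The product structure of $\mu$ is essential here (as it was in \cref{lem:subproblem} for the communication setting): when we condition on the queried coordinates, the marginal on the remaining coordinates is still product, so $\mu_1(R)$ factorizes and the feasibility inequalities transform linearly. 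I also expect some care is needed in handling the additive $\delta$ slack: choosing $\delta$ too small blows up the $\beta_0/((1-\alpha_0)\delta)$ term, too large blows up $4b\delta$, so one tracks it symbolically and only optimizes at the very end (which is presumably why the lemma is stated with $\delta$ free). The rest — the expectation computation for $\sum \mu(x)\,p_x$, fixing the coins, summing a geometric-type series over rounds — is routine and I would not grind through it in the proof sketch.
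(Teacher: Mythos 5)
Your proposal inverts the roles of the two certificate systems relative to the paper's proof, and the inversion does not go through as described. The paper recurses on $b$, the width of the $w$-system: in each round it extracts from the $u$-system a subcube $A_0$ of support size at most $a$ that is $\delta$-biased towards $0$ (\cref{cl:biasedcube}), queries those $\le a$ bits, and then uses the product structure of $\mu$ together with the bias of $A_0$ and the packing constraint \eqref{eq:one2} to show (\cref{cl:eliminate}) that the $w$-subcubes whose supports are \emph{disjoint} from $A_0$'s carry total $\mu_1$-weight at most $\beta_1+\delta$ and can be zeroed out. Every surviving $w$-subcube therefore intersects $A_0$'s support and loses at least one support coordinate after the query, so the $w$-width drops to $b-1$; this gives $b$ rounds of $a$ queries each and the $4b(\beta_1+\delta)$ error term. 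Your scheme queries supports of $w$-subcubes ($\le b$ bits per round) and hopes to terminate after $\approx a$ rounds via the $u$-cover, but you supply no mechanism by which a round of the $w$-recursion reduces the $u$-side budget: the coordinates in the support of a randomly chosen $w$-subcube need have nothing to do with the supports of the $u$-subcubes, so after $a$ rounds the $u$-cover need not have ``collapsed'' at all. This is precisely the coupling you flag as ``where the real work is,'' and it is the heart of the lemma, not a bookkeeping detail.

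There is also a concrete quantitative error in your step (1). The feasibility definition places no bound on $W:=\sum_R w_R$ (only the pointwise bounds \eqref{eq:one1.5} and \eqref{eq:one2}), so if you sample $R$ with probability proportional to $w_R$, the expected $\mu_1$-mass of $1$-inputs falling in $R$ is $(1/W)\sum_R \mu_1(R)w_R \ge (1-\alpha_1)\mu_1/W$, not $(1-\alpha_1)\mu_1$. Since $W$ can be enormous, you do not eliminate a constant fraction of the $1$-mass per round, and the number of rounds in your cover-style recursion is not bounded by $a$ (nor by anything useful). Your error accounting reflects this confusion: you attribute $4b(\beta_1+\delta)$ to ``$\le 4b$ or so rounds,'' yet your own depth bound requires only $\approx a$ rounds. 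The correct argument keeps the entire $w$-solution at every node (restricting it by zeroing the disjoint subcubes), never samples from it, and uses the $u$-system solely to produce the biased subcube whose support is queried.
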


The query complexity bound stated in the introduction
(\cref{thm:qcquad}) follows from the above lemma.
\begin{proof}[Proof of {\cref{thm:qcquad}}]
Let $c :=8 + \log \qprt_\ve(f)$. Let $\gamma = 1/c^8$. Then from
\cref{claim:boosting} we get that $d :=\log \qprt_\gamma(f) = O(c
\log c)$. Let $\{ w_{z,A}\}$ be an optimal solution for the primal of
$\qprt_\delta(f)$. Let $\B :=\{ A ~:~ |A| > d + \log \frac{1}{\gamma}
\}$. Then $\sum_z \sum_{A \in \B} w_{z,A} < \gamma$ since $\sum_z
\sum_{A} w_{z,A} \cdot 2^{|A|} = 2^d$ . This implies (by first
boosting and then removing the $A \in \B$) that $\mu$ is an
$(\alpha_0, \beta_0, \alpha_1,\beta_1, a, b)$-feasible distribution for $f$, with $\alpha_0 = \beta_0 = \alpha_1 = \beta_1 = 2\gamma$ and $a = b = O(c \log c)$.

From \cref{claim:main} (by setting $\delta = 1/c^4$) we get that there
is a decision tree for $f$ of depth at most $O(c^2 \log^2 c)$ with
error under $\mu$ at most $0.49$.
\end{proof}

\subsection{Proof of {\cref{claim:main}}}

In this section, we show that if a product distribution $\mu$ is
feasible, then the functions admit a decision tree of low
complexity. This decision tree is obtained from the feasible solution
of the LP as follows. We first show that feasibility implies the
existence of a biased subcube of small support (see
\cref{cl:biasedcube}). After querying the support of this subcube, one
is left with several subproblems. One of the subproblems corresponds
to the subcube itself, in which case we answer according to its
bias. For each of the other subproblems, we observe that the induced
distribution $\mu$ admits a feasible solution consisting of rectangles
with a strictly smaller support size. This is proved by showing that
the contribution of rectangles whose supports are disjoint to the
original subcube is negligible (see \cref{cl:eliminate}). This step crucially uses the product
nature of the distribution $\mu$.
 
For a set $A \subseteq \{0,1\}^n$, let $\mu_0(A) = \mu(A \cap f^{-1}(0))$ and
$\mu_1(A) = \mu(A \cap f^{-1}(1))$; let $\mu_0 = \mu_0(\{0,1\}^n)$ and
$\mu_1 = \mu_1(\{0,1\}^n)$.

\begin{claim} 
\label{cl:biasedcube}
Suppose $\mu: \{0,1\}^n \rightarrow [0,1]$ is a product probability
distribution satisfying \eqref{eq:zero1} , \eqref{eq:zero2} and \eqref{eq:zero3}. Further, suppose $\delta >0$ is such that
\begin{equation}
(1 - \alpha_0) \mu_0 - \left(\frac{\beta_0}{\delta}\right)\mu_1 > 0. \label{eq:assumption}
\end{equation}
Then, there is subcube $A$ with support of size at most $a$, such that 
$\mu_1(A) \leq \delta \mu_0(1)$.
\end{claim}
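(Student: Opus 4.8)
The plan is to adapt, in simplified form, the argument that proves \cref{lem:bigrectangle} to the subcube setting; here it is actually easier, since product-ness of $\mu$ plays no role in this particular claim (it is needed only later, in \cref{cl:eliminate}). Fix a feasible solution $(u_R)$ to \eqref{eq:zero1}--\eqref{eq:zero3}, where $R$ ranges over subcubes of support size at most $a$. Call a subcube $R$ \emph{biased} (towards $0$) if $\mu_1(R) \le \delta\,\mu_0(R)$, and \emph{unbiased} otherwise. The goal is to show that $\sum_{R \text{ biased}} u_R\,\mu_0(R) > 0$; once this is established, at least one biased subcube $R$ must have $u_R > 0$, hence it has support size at most $a$ and, being biased, satisfies $\mu_1(R) \le \delta\,\mu_0(R)$, so we may take $A = R$.

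To bound the contribution of the unbiased subcubes, I would use the definition of ``unbiased'' (each such $R$ has $\mu_0(R) < \mu_1(R)/\delta$) followed by the constraints \eqref{eq:zero2}, swapping the order of summation and using $\mu_1(R) = \sum_{x \in R \cap f^{-1}(1)} \mu(x)$:
\[
\sum_{R \text{ unbiased}} u_R\,\mu_0(R)
 \;\le\; \frac{1}{\delta} \sum_{R} u_R\,\mu_1(R)
 \;=\; \frac{1}{\delta} \sum_{x \in f^{-1}(1)} \mu(x) \sum_{R \ni x} u_R
 \;\le\; \frac{\beta_0}{\delta}\,\mu_1 .
\]
For the total, the constraints \eqref{eq:zero1} give, by the same manipulation,
\[
\sum_{R} u_R\,\mu_0(R) \;=\; \sum_{x \in f^{-1}(0)} \mu(x) \sum_{R \ni x} u_R \;\ge\; (1-\alpha_0)\,\mu_0 .
\]
Subtracting the bound on the unbiased part yields
\[
\sum_{R \text{ biased}} u_R\,\mu_0(R) \;\ge\; (1-\alpha_0)\,\mu_0 - \frac{\beta_0}{\delta}\,\mu_1 \;>\; 0,
\]
the last inequality being exactly the hypothesis \eqref{eq:assumption}, which completes the argument.

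I do not expect any real obstacle: the claim is a short LP averaging argument. The only points to be careful about are (i) that the variables $u_R$ of the feasible system are, by definition, supported on subcubes of support size at most $a$, which is what makes the support bound on $A$ automatic, and (ii) that $\delta > 0$, so the division by $\delta$ is legitimate. (I read the right-hand side ``$\delta\mu_0(1)$'' in the statement as $\delta\,\mu_0(A)$, i.e.\ the conclusion is that $A$ is a subcube biased towards $0$; if instead $\mu_0(1)$ means $\mu_0$, the same computation applies verbatim since $\mu_0(A) \le \mu_0$.)
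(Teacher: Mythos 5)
Your proof is correct and follows essentially the same averaging argument as the paper's: bound the unbiased subcubes' contribution to $\sum_R u_R\mu_0(R)$ by $(\beta_0/\delta)\mu_1$ via \eqref{eq:zero2}, lower-bound the total by $(1-\alpha_0)\mu_0$ via \eqref{eq:zero1}, and conclude that some biased subcube has positive weight. Your reading of ``$\delta\mu_0(1)$'' as $\delta\mu_0(A)$ matches how the claim is actually invoked later in the induction step.
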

\begin{proof}
We say $A$ is biased if $\mu_1(A) \leq \delta \mu_0(A)$.
From \eqref{eq:zero2}, we have
\begin{align}
\beta_0 \mu_1 &\geq \sum_{A} \mu_1(A) u_A \\
 & \geq \sum_{\mbox{\scriptsize $A$ not biased}} \mu_1(A)u_A \\ 
 & \geq \delta \sum_{\mbox{\scriptsize $A$ not biased}}  \mu_0(A) u_A.
\end{align}
Combining this with \eqref{eq:zero1}, we obtain
\begin{align}
(1-\alpha_0)\mu_0
- \left(\frac{\beta_0}{\delta}\right) \mu_1 & \leq
\sum_{A} u_A \mu_0(A) - \sum_{\mbox{\scriptsize $A$ unbiased}} u_A \mu_0(A)\\
&\leq  \sum_{\mbox{\scriptsize $A$ biased}} u_A \mu_0(A).
\end{align}
Since the left hand side is positive, the sum on the right cannot be
empty. The claim follows from this.
\end{proof}

\newcommand{\cB}{{\cal B}}
\begin{claim} \label{cl:eliminate}Let $\delta > 0$. Fix a product distribution $\mu$, and let $A$ be a subcube such that
$\mu_1(A) \leq \delta \mu_0(A)$.  Suppose $(w_R: R \mbox{ a
  subcube})$ satisfies \eqref{eq:one1.5} and \eqref{eq:one2}.
Let $\cB = \{B: \support(B) \cap \support(A) = \emptyset\}$.
Then,
$$ \sum_{B \in \cB} \mu_1(B) w_B  \leq \beta_1 + \delta . $$

\end{claim}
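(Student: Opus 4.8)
The plan is to split the sum $\sum_{B \in \cB} \mu_1(B) w_B$ according to whether the bias constraint on $A$ helps us or the packing constraint~\eqref{eq:one2} on the zero-inputs does. For a subcube $B$ with $\support(B) \cap \support(A) = \emptyset$, the product structure of $\mu$ gives the key factorization $\mu(A \cap B) = \mu(A)\cdot\mu(B)/|\mu|$ (and similarly for $\mu_1$ restricted to $f^{-1}(1)$), since the coordinates fixed by $A$ and those fixed by $B$ are disjoint; more usefully, I want to relate $\mu_1(B)$ to a quantity controlled by $A$. The first step is therefore to observe, using that $\mu$ is product and $\support(B)$ is disjoint from $\support(A)$, that for every such $B$ we have $\mu_1(B\cap A) = \mu_1(B)\cdot(\mu(A)/|\mu|)$ when $B\cap A\subseteq f^{-1}(1)$ — but this needs care because $f$ is not a product function. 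The cleaner route: decompose $B = (B\cap A)\,\dot\cup\,(B\setminus A)$ and bound the two pieces separately.

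For the piece inside $A$: since $\mu_1(A) \le \delta \mu_0(A) \le \delta\,\mu(A) \le \delta|\mu|$, and $\sum_{B\in\cB, B\ni x} w_B \le \sum_{B\ni x} w_B \le 1$ by~\eqref{eq:one1.5}, we get $\sum_{B\in\cB} \mu_1(B\cap A) w_B \le \sum_{(x,y)\in A\cap f^{-1}(1)} \mu_{x,y}\sum_{B\ni(x,y)} w_B \le \mu_1(A) \le \delta$ (taking $|\mu|\le 1$, or carrying $|\mu|$ through). For the piece outside $A$: a point $x \in B\setminus A$ disagrees with $A$ in some coordinate of $\support(A)$, and crucially, since $\support(B)$ avoids $\support(A)$, being in $B\setminus A$ is determined entirely by coordinates outside $\support(B)$. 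Here I would use the product structure to say that the relative mass of $B\setminus A$ inside $B$ is exactly $(|\mu| - \mu(A))/|\mu|$ — independent of $B$ — but more simply, I can just bound $\mu_1(B\setminus A) \le \mu_1(B) \le \mu_0\text{-type bound}$; that's circular, so instead I pass to~\eqref{eq:one2}: for $x\in f^{-1}(0)$, $\sum_{B\ni x} w_B \le \beta_1$. The trick is that $B\setminus A$ consists of points that ``look like a $0$-biased region'' only after intersecting with $A$'s complement — this is where I expect the real work to be.

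Let me reorganize. The honest approach: write $\sum_{B\in\cB}\mu_1(B) w_B = \sum_{B\in\cB} w_B\big(\mu_1(B\cap A) + \mu_1(B\setminus A)\big)$. The first term is $\le \delta$ as above (via~\eqref{eq:one1.5} and $\mu_1(A)\le\delta$). For the second, I claim $\sum_{B\in\cB} w_B\,\mu_1(B\setminus A) \le \beta_1$: because $\support(B)$ is disjoint from $\support(A)$, membership in $B$ and membership in the complement of $A$ are ``independent'' under the product $\mu$, so $\mu_1(B\setminus A)$ relates to packing constraints, but the cleanest bound replaces $\mu_1$ by $\mu$ on the complement of $f^{-1}(1)$... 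Actually the intended bound is surely: $\mu_1(B) \le \mu_1(B\cap A) + \mu(B \cap \bar A)$, and using product-ness, $\mu(B\cap\bar A) = \mu(B)(|\mu|-\mu(A))/|\mu|$; since $A$ is $0$-biased, $\mu(A)$ is close to $\mu_0(A)$, but we don't know $\mu(A)$ is large. I think the correct final step uses $\sum_{B\in\cB} w_B \mu(B\cap \bar A)$, rewrites $B\cap\bar A$ via products, and invokes~\eqref{eq:one2} on the points of $\bar A$ that lie in $f^{-1}(0)$ together with the fact that $\bar A$ captures almost all the $f^{-1}(1)$ mass is \emph{not} needed — rather one shows the total is at most $\beta_1$ directly from~\eqref{eq:one2} after noting $\mu_1(B\setminus A)$ can be charged to a packing-type sum. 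I would carry $|\mu|$ explicitly and expect the main obstacle to be correctly exploiting the product factorization $\mu(A\cap B)/\mu(B) = \mu(A)/|\mu|$ to convert the ``biased subcube $A$'' hypothesis into a bound on $\sum_{B\in\cB}\mu_1(B)w_B$; once that identity is in hand, the two pieces combine to give $\delta + \beta_1$, completing the proof.
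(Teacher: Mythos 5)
There is a genuine gap. Your decomposition $\mu_1(B) = \mu_1(B\cap A) + \mu_1(B\setminus A)$ handles the first piece fine (via \eqref{eq:one1.5} and the bias of $A$), but it strands you on the second piece, and your text never closes it: you write ``I claim $\sum_B w_B\,\mu_1(B\setminus A)\le \beta_1$'' and then concede that every route you try is circular or unfinished (``this is where I expect the real work to be''). The reason no route works from that starting point is structural: the only two hypotheses available are the packing constraint \eqref{eq:one2}, which controls sums over points of $f^{-1}(0)$, and the bias $\mu_1(A)\le\delta\mu_0(A)$, which controls the $1$-mass \emph{inside} $A$. Neither says anything about the $f^{-1}(1)$-mass of $B$ \emph{outside} $A$, which is exactly what $\mu_1(B\setminus A)$ is. So the split into ``inside $A$'' versus ``outside $A$'' applied directly to $\mu_1$ cannot be completed.

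The missing idea (which you circle with the identity $\mu(A\cap B)=\mu(A)\mu(B)$ but never deploy in the right order) is to \emph{first} discard the subscript, $\mu_1(B)\le\mu(B)$, and \emph{then} use disjointness of supports to transport all of $B$'s mass into $A$: for $B\in\cB$, product-ness gives $\mu(B)=\mu(A\cap B)/\mu(A)$. Only now do you split by the value of $f$, but inside $A\cap B$ rather than inside $B$: $\mu(A\cap B)=\mu_0(A\cap B)+\mu_1(A\cap B)$. Summing against $w_B$, the $\mu_0$ term is at most $\beta_1\mu_0(A)\le\beta_1\mu(A)$ by \eqref{eq:one2}, and the $\mu_1$ term is at most $\mu_1(A)\le\delta\mu_0(A)\le\delta\mu(A)$ by \eqref{eq:one1.5} and the bias of $A$; dividing by $\mu(A)$ gives $\beta_1+\delta$. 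In short, the $f$-value split must happen \emph{after} restricting to $A$, not before, because that is the only region where both hypotheses have traction.
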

\begin{proof} We will use the product nature of the distribution in the following form: if $A$ and $B$ are subcubes with disjoint supports, then 
$\mu(A)\mu(B) = \mu(A \cap B)$. Thus, we have
\begin{align}
\sum_{B \in \cB} \mu_1(B) w_B&\leq \sum_{B \in \cB} \mu(B) w_B \\
                   &\leq \frac{1}{\mu(A)} \sum_{B \in \cB} \mu(A \cap  B) w_B \\
                   &\leq \frac{1}{\mu(A)} \sum_{B \in \cB}  \mu_0(A \cap
                    B)w_B + \frac{1}{\mu(A)} \sum_{B \in \cB} \mu_1(A \cap B) w_B.
\label{eq:twoterms}
\end{align}
Let us bound the two terms on the right separately.
For the first term, by our assumption \eqref{eq:one2}, we have
$$ \sum_{B \in \cB} \mu_0(A \cap B)w_B \leq \beta_1 \mu_0(A) \leq \beta_1 \mu(A). $$
For second term, by assumption \eqref{eq:one1.5}, we have
$$\sum_{B \in \cB} \mu_1(A \cap B) w_B \leq \mu_1(A) 
 \leq \delta \mu_0(A)  \leq \delta \mu(A). $$
By using these bounds in \eqref{eq:twoterms}, we establish our claim.
\end{proof}

%

\begin{proof}[Proof of {\cref{claim:main}}]
 We assume that $ \mu_0, \mu_1 \geq \frac{1}{4},$  for otherwise, we can reliably guess the answer without making any
query.  Similarly, we assume that \eqref{eq:assumption} holds, for
otherwise, we may answer 1 without making any query, and yet err with 
probability at most
$$ \mu_0 \leq \frac{\beta_0 \mu_1}{(1-\alpha_0)\delta}. $$
We now proceed by induction on $b$.
\paragraph*{Base case ($b=0$):} The only subcube $R$ that may appear in
the inequalities \eqref{eq:one1}--\eqref{eq:one2} is the one with
empty support (that is, $R$ contains all inputs). Since
$\mu_1 \geq \frac{1}{4}$, we conclude from \eqref{eq:one1} and
\eqref{eq:one2} that $1-\alpha_1 \leq w_R \leq \beta$ or $\alpha_1
+ \beta_1 \geq 1$; so the claim holds trivially.

\paragraph*{Induction step ($b \geq 1$):} Using \eqref{eq:zero1} and
\eqref{eq:zero2} and \cref{cl:biasedcube}, we conclude that
there is a rectangle $A_0$ such that
$\mu_1(A_0) \leq \delta \mu_0(A_0)$.  We first query the bits in
the support of $A_0$. For each result $\sigma \in \{0,1\}^a$, we are
left with a subcube of inputs to investigate.

By \cref{cl:eliminate}, we have, as $B$
ranges over subcubes whose supports are disjoint from $A_0$'s, that
$$ \sum_{B} \mu_1(B) w_B  \leq \beta_1 + \delta. $$
It follows from \eqref{eq:one1} that (now summing over all $R$ whose
supports intersect $A_0$'s)
\begin{align}
\sum_R \mu_1(R)w_R &\geq (1-\alpha_1) \mu_1 - \beta_1 - \delta \nonumber\\
& \geq (1- (\alpha_1 + 4 \beta_1 + 4 \delta)) \mu_1. 
\label{eq:afterelimination}
\end{align}
For each outcome $\sigma$ for the bits queried, let $\mu^{\sigma}$ be
the resulting conditional distribution on inputs, where variables in
support of $A_0$ are fixed at $\sigma$ in $\mu^{\sigma}$. We will now
construct an LP-solution satisfying \eqref{eq:zero1}--\eqref{eq:one1}
for this derived problem. The components $u_R$ will be retained
without any change. For $w_R$, set $w_R=0$ for $R$ whose support is
disjoint from $A_0$'s, and define $\alpha_1^{\sigma}$ by
$$
 \sum_R \mu^\sigma_1(R) w_R := (1-\alpha_1^{\sigma}) \mu^\sigma_1.
$$
Then, by \eqref{eq:afterelimination}, $\mu^{\sigma}$ is an
$(\alpha_0, \beta_0, \alpha^\sigma_1,
\beta_1, a, b-1)$-product distribution for $f$ (recall our convention
that we do not include the index of a fixed bit in the
support of our subcubes). Furthermore, by \eqref{eq:afterelimination}
\begin{equation}
 \E[\alpha^{\sigma}_1] \leq \alpha_1 + 4 \beta_1 + 4 \delta.  \label{eq:onesigma}
\end{equation}
By induction, $\mu^{\sigma}$ admits a decision tree of depth at
most $a(b-1)$ that errs with probability at most
\[ \eps^{\sigma} \leq  
\frac{1}{4} +  \alpha^\sigma_1 + \beta_1 +  4(b-1)(\beta_1  + \delta) +
\frac{\beta_0}{(1-\alpha_0)\delta}
,\] when inputs are drawn according to $\mu^\sigma$. 
It follows from
\eqref{eq:onesigma}, the overall error is 
\begin{align*}
\E_\sigma[\eps^\sigma]
     & \leq \E_{\sigma}\left[ \frac{1}{4} +  \alpha^\sigma_1 + \beta_1 +  4(b-1)(\beta_1  + \delta) +
\frac{\beta_0}{(1-\alpha_0)\delta}\right]    \\
& \leq 
\frac{1}{4} +  \alpha_1 + \beta_1 +  4b(\beta_1  + \delta) +
\frac{\beta_0}{(1-\alpha_0)\delta}. \qedhere
\end{align*}
\end{proof}

\section*{Acknowledgements}

We thank Mika G{\"{o}}{\"{o}}s for pointing out an error in an
earlier version of the paper and for referring us to the literature.

{\small

}

\appendix

\section{Query Complexity LP bounds}{\label{sec:qcapp}

\begin{definition}[Query partition bound~\cite{JainK2010}]

The query partition bound of $f:\{0,1\}^n \rightarrow \{0,1\}$,
denoted $\qprt_\ve(f)$, is the optimal value of the following
linear program. In the following $A$ is a subcube of $\{0,1\}^n$.


\vspace{0.2in}

{\footnotesize
\begin{minipage}{2.2in}
    \centerline{\underline{Primal}}\vspace{-0.1in}
    \begin{align}
  &   &  {\min}\quad  \sum_{z} \sum_{A}  w_{z,A} \cdot 2^{|A|} \\
& {\rm s.t.} & \nonumber \\ 
       \quad &  \forall x\in\{0,1\}^n  :&  \sum_{A: x \in A} w_{f(x),A} \geq 1 - \epsilon \label{eq:constr1}\\
      & \forall x\in \{0,1\}^n :&  \sum_{A: x \in A} \quad \sum_{z}  w_{z,A} = 1  \label{eq:constr2}\\
      & \forall (z,A)  :&  w_{z,A} \geq 0 \label{eq:constr3}
         \end{align}
\end{minipage}\hspace{0.5in}
\begin{minipage}{2.2in}\vspace{-1in}
    \centerline{\underline{Dual}}\vspace{-0.1in}
        \begin{align*}
    & & \max \quad    (1-\epsilon)\sum_{x} \mu_{x} + \sum_{x}\phi_{x} \\
& {\rm s.t.} & \\ 
      &  \forall (z,A) :& \sum_{x\in A \cap f^{-1}(z)} \mu_{x}   + \sum_{x\in  A} \phi_{x}   \leq 2^{|A|}\\
 	& \forall x :& \mu_{x} \geq 0, \phi_{x} \in \reals
    \end{align*}
\end{minipage}
}
\end{definition}

\vspace{0.2in}



The error parameter for partition bound can be reduced as in the
following claim.

\begin{claim}\label{claim:qprt-err-red}
Let $\ve > \delta >0$. Then $\log \qprt_\delta(f) \leq O\left(\frac{1}{(0.5 - \ve)^2}\log\frac{1}{\delta} \right) \log \qprt_\ve(f) .$
\end{claim}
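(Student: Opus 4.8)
The plan is to perform error reduction directly at the level of the linear program, amplifying correctness by a majority vote over $k$ independent copies of an optimal solution. Start from an optimal primal solution $(w_{z,A})$ for $\qprt_\ve(f)$. For each input $x$, set $q_x(z) := \sum_{A \ni x} w_{z,A}$; constraint~\eqref{eq:constr2} says $q_x(0)+q_x(1)=1$, so $q_x$ is a probability distribution on $\{0,1\}$, and constraint~\eqref{eq:constr1} says $q_x(f(x)) \ge 1-\ve > \tfrac12$. The key observation is that the subcubes containing a fixed $x$ carry, via $(w_{z,A})$, a probability distribution whose label marginal is exactly $q_x$.

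Next I would pick an integer $k$ with $k \ge \ln(1/\delta)/\bigl(2(0.5-\ve)^2\bigr)$ and $k = O\bigl(\log(1/\delta)/(0.5-\ve)^2\bigr)$, and build a candidate solution for $\qprt_\delta(f)$ indexed by $k$-tuples of (label, subcube) pairs: for each output value $z$ and subcube $A$, set
\[
  w'_{z,A} \ :=\ \sum\Bigl\{\ \textstyle\prod_{i=1}^{k} w_{z_i,A_i}\ :\ \emptyset \ne \textstyle\bigcap_{i=1}^{k} A_i = A,\ \ \mathrm{maj}(z_1,\dots,z_k)=z\ \Bigr\},
\]
with ties in $\mathrm{maj}$ broken arbitrarily and tuples whose subcubes intersect emptily simply discarded. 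Here $\bigcap_i A_i$ is again a subcube, with support contained in $\bigcup_i \support(A_i)$, so $\bigl|\bigcap_i A_i\bigr| \le \sum_i |A_i|$.

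The proof then reduces to three checks. (i) Constraint~\eqref{eq:constr2} for $(w')$: for fixed $x$, the contributing tuples are exactly those with $x\in A_i$ for every $i$ (which automatically have non-empty intersection, since it contains $x$), so $\sum_{A\ni x}\sum_z w'_{z,A} = \prod_{i=1}^k\bigl(\sum_{A_i\ni x}\sum_{z_i} w_{z_i,A_i}\bigr) = 1$. (ii) Constraint~\eqref{eq:constr1}: conditioned on $x\in A_i$ for all $i$, the per-coordinate weights already sum to $1$ (again by~\eqref{eq:constr2}, so no conditioning factor appears), hence $z_1,\dots,z_k$ are i.i.d.\ copies of $q_x$, and $\sum_{A\ni x} w'_{f(x),A} = \Pr\bigl[\mathrm{maj}(z_1,\dots,z_k)=f(x)\bigr] \ge 1 - e^{-2k(0.5-\ve)^2} \ge 1-\delta$ by Hoeffding's inequality and the choice of $k$. (iii) Objective value: using $2^{|\bigcap_i A_i|}\le\prod_i 2^{|A_i|}$ and then enlarging the sum to all $k$-tuples,
\[
  \sum_{z,A} w'_{z,A}\,2^{|A|}\ \le\ \sum_{(z_i,A_i)_{i=1}^{k}}\Bigl(\prod_{i=1}^{k} w_{z_i,A_i}\Bigr)\prod_{i=1}^{k} 2^{|A_i|}\ =\ \Bigl(\sum_{z,A} w_{z,A}\,2^{|A|}\Bigr)^{k}\ =\ \qprt_\ve(f)^{k}.
\]
Thus $(w')$ is feasible for $\qprt_\delta(f)$, so $\qprt_\delta(f)\le\qprt_\ve(f)^{k}$, and taking logarithms gives $\log\qprt_\delta(f)\le k\log\qprt_\ve(f) = O\!\bigl(\tfrac{1}{(0.5-\ve)^2}\log\tfrac1\delta\bigr)\log\qprt_\ve(f)$.

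I do not expect a genuine obstacle here — this is a textbook amplification — but two points deserve a little care. The first is that $w'$ must be restricted to tuples with non-empty intersection so that~\eqref{eq:constr2} remains an exact equality for $(w')$; this restriction is harmless for the objective, which only decreases. The second is the observation underlying step (ii): because the normalization in~\eqref{eq:constr2} equals exactly $1$, conditioning on $x\in A_i$ for all $i$ leaves a genuine product measure whose $i$-th label marginal is $q_x$, so the independence needed for the Hoeffding estimate comes for free.
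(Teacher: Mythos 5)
Your proposal is correct and follows essentially the same route as the paper's own proof: amplify by taking $k$ (there, an odd $t$) independent copies of the optimal primal solution, assign the product weight to the intersection subcube with the majority label, verify the two constraints exactly as you do (the normalization constraint makes the labels i.i.d.\ with marginal $q_x$, and Chernoff/Hoeffding gives error $\delta$), and bound the objective via $2^{|\bigcap_i A_i|} \le \prod_i 2^{|A_i|}$. Your explicit remarks about discarding tuples with empty intersection and about why independence comes for free are fine points the paper leaves implicit, but they do not change the argument.
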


\begin{proof}
Let $\{ w_{z,A}\}$ be an optimal solution for the primal of
$\qprt_\ve(f)$. Using this solution, we will obtain a solution
$\{v_{z,A}\}$ showing that $\qprt_\delta(f)$ is small.
Let $t$ be an odd positive integer. 
For $z \in \{0,1\}$, let $G_z = \{(z_1, \ldots, z_t): \sum_i z_i \geq t/2\}$ and for a subcube $A$, let $G_A = \{
(A_1, \ldots, A_t): \bigcap_i A_i = A \}$.
Now consider the following assignment to the variables:
$$v_{z,A} :=\sum_{(z_1, \ldots, z_t) \in G_z, (A_1, \ldots, A_t) \in
 G_A} \prod_{i=1}^t w_{z_i,A_i} . $$ We will choose $t$ such that this
 assignment constitutes a valid solution to the above linear program 
with $\delta$ instead of $\epsilon$. 

Fix $x$.  Then,
\begin{align*}
\sum_{A \ni x} v_{f(x),A} & =   \sum_{A \ni x}  \sum_{(z_1, \ldots, z_t) \in G_{f(x)}, (A_1, \ldots, A_t) \in G_A}  \prod_{i=1}^t w_{z_i,A_i}  \\
& = \sum_{(z_1, \ldots, z_t) \in G_{f(x)} , (A_1 \ni x , \ldots, A_t \ni x) } 
 \prod_{i=1}^t w_{z_i,A_i} \\
& \geq 1 - \exp(-2(0.5-\ve)^2 t).   \quad \mbox{(using Chernoff bounds)}
\end{align*}
By choosing an appropriate $t =
O\left(\frac{1}{(0.5-\ve)^2} \ln \frac{1}{\delta}\right)$, we ensure
that constraint~\eqref{eq:constr1} is satisfied, with $\delta$ instead
of $\ve$. Furthermore,
\begin{align*}
\sum_z \sum_{A \ni x} v_{z,A} & =   \sum_z \sum_{A \ni x}  \sum_{(z_1, \ldots, z_t) \in G_z, (A_1, \ldots, A_t) \in G_A} \left( \Pi_{i=1}^t w_{z_i,A_i} \right) \\
& = \sum_{(z_1, \ldots, z_t) , (A_1 \ni x , \ldots, A_t \ni x) } \left( \Pi_{i=1}^t w_{z_i,A_i} \right) \\
& = \prod_{i=1}^t   \sum_{z_i} \sum_{A_i \ni x}      w_{z_i, A_i}  \\
& = 1 .
\end{align*}
Thus, the constraint~\eqref{eq:constr2}) is satisfied. Constraint~\eqref{eq:constr3} is clearly satisfied.

It remains the estimate the new objective value.
\begin{align*}
\sum_z \sum_{A} v_{z,A} 2^{|A|} 
&\leq \sum_z \sum_{A}  \sum_{(z_1, \ldots, z_t) \in G_z, (A_1, \ldots, A_t) \in G_A} \left( \prod_{i=1}^t w_{z_i,A_i} 2^{|A_i|}\right) \\
& = \sum_{(z_1, \ldots, z_t) , (A_1, \ldots, A_t) } \left( \prod_{i=1}^t w_{z_i,A_i} 2^{|A_i|}\right) \\
& = \prod_{i=1}^t \left(  \sum_{z_i} \sum_{A_i}      w_{z_i, A_i} 2^{|A_i|}         \right) .
\end{align*}
Thus, the right hand side is at most $\qprt_\ve(f)^t$.
Our claim follows by taking logs. 
\end{proof}

\section{Communication Complexity LP bounds}\label{sec:ccapp}

\paragraph*{Information Complexity} The notion of information complexity 
was formalized by Chakrabarti, Shi, Wirth and
Yao~\cite{ChakrabartiSWY2001} in the direct sum context.  Similar
information theoretic arguments have been used in earlier works as well without explicitly defining information complexity 
(see Nisan and Wigderson~\cite{NisanW1993} and Ponzio,
Radhakrishnan and Venkatesh~\cite{PonzioRV2001} for applications to
the pointer-chasing
problem). Chakrabarti~\etal~\cite{ChakrabartiSWY2001} defined and
used, what in today's language is called,``external information
cost''. Bar-Yossef, Jayram, Kumar and Sivakumar~\cite{BarYossefJKS2004} used
the notion, what in today's language is called the``internal information
cost'', in their proof of the disjointness lower bound.
The formal definition and the terminology ``internal information
cost'' and ``external information cost'' was introduced by
Barak~\etal~\cite{BarakBCR2013}.

\begin{definition}[Information Complexity] Let $\Pi$ be any randomized
  protocol between two parties Alice and Bob with inputs $\X \times \Y$ that
  are distributed according to a distribution $\mu$. The (internal)
  information cost of the protocol $\Pi$, denoted $\IC^\mu(\Pi)$ is
  defined as $$\IC^\mu(\Pi) = I[X: T_\Pi | Y ] + I[Y:T_\Pi|X],$$
where $(X,Y)$ is the random variable denoting the pair of inputs and
$T_\Pi$ is the random variable denoting the transcript of the protocol
(including public randomness). 

For a function $f: \X
  \times \Y \rightarrow \Z$, $\ve \in (0,1)$ and a distribution
  $\mu$ on $\X \times \Y$, the $\ve$-information complexity of $f$
  under $\mu$, denoted $\IC^\mu_\ve(f)$, is defined to be $\min_\Pi
  \IC^\mu(\Pi)$, where the minimum is taken over all protocols $\Pi$ that compute
  $f$ with error at most $\ve$ under the distribution $\mu$. The
  $\ve$-information complexity of $f$, denoted $\IC_\ve(f)$, is
  defined to be $\max_\mu \IC^\mu_\ve(f)$, where the maximum is over
  all distributions $\mu$.
\end{definition}
An alternate notion of information complexity was considered by
Braverman~\cite{Braverman2015} by changing the order of quantifiers.
\begin{definition}[Prior-free information complexity]
For a function $f: \X
  \times \Y \rightarrow \Z$ and $\ve \in (0,1)$ , the prior free
  information complexity, denoted $\IC^p_\ve(f)$, is defined to be
  $\min_\Pi \max_\mu \IC^\mu(\Pi)$ where the minimum is taken over all
  randomized protocols $\Pi$ that compute the function $f$ correctly
  on all inputs with error at most $\ve$ and the maximum is over all
  distributions $\mu$.
\end{definition}
Braverman~\cite{Braverman2015} showed that these two notions of
information complexity are within constant factors of each other. More
precisely,
$$\IC_\ve(f) \leq \IC^p_\ve(f) \leq 2 \cdot \IC_{\ve/2}(f).$$

This alternate characterization of information complexity immediately
yields the following error reduction claim.
\begin{claim}\label{claim:IC-err-red}
Let $\delta >0$. Then $\IC_\delta(f) \leq O(\log\frac{1}{\delta})  \cdot  \IC_{1/8}(f) .$
\end{claim}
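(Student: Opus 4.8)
The plan is to derive this error-reduction bound from Braverman's prior-free characterization of information complexity together with a routine success-amplification argument. We may assume $\delta<1/8$, since otherwise the bound is immediate from the monotonicity of $\IC_\delta(f)$ in $\delta$. As $\IC_\delta(f)\le\IC^p_\delta(f)$, it is enough to bound $\IC^p_\delta(f)$. First I would fix a randomized protocol $\Pi$ that computes $f$ with error at most $1/4$ on every input and whose worst-case information cost $\max_\mu\IC^\mu(\Pi)$ is (up to an arbitrarily small slack) $\IC^p_{1/4}(f)$; by Braverman's inequality $\IC^p_{1/4}(f)\le 2\,\IC_{1/8}(f)$.

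Next I would amplify success: set $k:=\lceil 8\ln(1/\delta)\rceil=O(\log(1/\delta))$ and let $\Pi'$ run $k$ independent copies of $\Pi$, each with fresh public randomness, and output the majority of the $k$ answers. On any fixed input the $k$ runs err independently, each with probability at most $1/4$, so a Chernoff bound shows that $\Pi'$ errs with probability at most $e^{-k/8}\le\delta$ on every input; hence $\Pi'$ is admissible in the definition of $\IC^p_\delta(f)$.

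The one place that needs a genuine (if standard) argument — and which I expect to be the only obstacle — is subadditivity of information cost under independent parallel repetition, namely $\IC^\mu(\Pi')\le k\,\IC^\mu(\Pi)$ for every $\mu$. Writing the transcript of $\Pi'$ as $T=(T_1,\dots,T_k)$, the runs use mutually independent randomness, so $T_1,\dots,T_k$ are conditionally independent given $(X,Y)$; in particular $I(T_{<i};T_i\mid X,Y)=0$, and expanding $I(X,T_{<i};T_i\mid Y)$ by the chain rule in the two orders gives $I(X;T_i\mid T_{<i},Y)=I(X;T_i\mid Y)-I(T_{<i};T_i\mid Y)\le I(X;T_i\mid Y)$. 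Summing over $i$ and using that the $k$ runs are identically distributed, $I(X;T\mid Y)=\sum_i I(X;T_i\mid T_{<i},Y)\le k\,I(X;T_1\mid Y)$, and symmetrically $I(Y;T\mid X)\le k\,I(Y;T_1\mid X)$; adding these yields $\IC^\mu(\Pi')\le k\,\IC^\mu(\Pi)$, hence $\IC^p_\delta(f)\le k\cdot\IC^p_{1/4}(f)$ after taking the maximum over $\mu$. I would then conclude by chaining $\IC_\delta(f)\le\IC^p_\delta(f)\le k\cdot\IC^p_{1/4}(f)\le 2k\cdot\IC_{1/8}(f)=O(\log(1/\delta))\cdot\IC_{1/8}(f)$. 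Everything beyond the subadditivity computation is just managing the constants in Braverman's relations and in the Chernoff bound.
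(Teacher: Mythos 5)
Your proposal is correct and follows exactly the route the paper intends: the paper cites Braverman's equivalence $\IC_\ve(f)\le\IC^p_\ve(f)\le 2\,\IC_{\ve/2}(f)$ and asserts the claim follows "immediately," and your majority-amplification of a single worst-case protocol together with the subadditivity of information cost under independent repetition is precisely the omitted argument. Your chain-rule derivation of $I(X;T_i\mid T_{<i},Y)\le I(X;T_i\mid Y)$ from the conditional independence of the runs given $(X,Y)$ is sound, so no gap remains.
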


\paragraph*{The partition bound:} The partition bound was introduced by Jain and Klauck~\cite{JainK2010} as a linear programming bound to lower bound the public-coin randomized communication complexity.

\begin{definition}[Partition bound~\cite{JainK2010}] For a function $f: \X \times \Y \rightarrow \Z$ and $\ve \in (0,1)$, 
the $\ve$-partition bound of $f$, denoted $\qprt_\ve(f)$, is defined to be the  optimal value of the following linear program. Below $R$ represents a rectangle in $\X\times \Y$ and  $(x,y,z) \in \X \times \Y \times \Z$.\\

{\footnotesize
\begin{minipage}{2.2in}
    \centerline{ \underline{Primal}}
 \begin{align*}
    {\min}\quad \sum_{z} \sum_{R}  w_{z,R} \\
    \sum_{R: (x,y) \in R} w_{f(x,y),R} &\geq 1 - \epsilon, && \forall (x,y)\\
    \sum_{R: (x,y) \in R} \quad \sum_{z}  w_{z,R} &= 1, && \forall (x,y)  \\
      w_{z,R} &\geq 0, &&  \forall (z,R) \enspace .
    \end{align*}
\end{minipage}\hspace{0.5in}
\begin{minipage}{2.2in}\vspace{-0.1in}
    \centerline{\underline{Dual}}\vspace{-0.0in}
 \begin{align*}
 {\max}\quad    (1-\epsilon)\sum_{(x,y)} \mu_{x,y} + \sum_{(x,y)}\phi_{x,y} \\
      \sum_{(x,y)\in R \cap f^{-1}(z)} \mu_{x,y}   + \sum_{(x,y)\in  R} \phi_{x,y}  & \leq 1, && \forall (z,R)\\
      \mu_{x,y} &\geq 0, && \forall (x,y)\\
      \phi_{x,y} &\in \reals, && \forall (x,y) \enspace . 
 \end{align*}
\end{minipage}
}
\end{definition}
Jain and Klauck~\cite{JainK2010} show that $\CC^{\rm pub}_\ve(f) \geq \log(\prt_\ve(f))$.

The error in the partition bound in the communication complexity
setting can be reduced in a fashion similar to query setting (ie., \cref{claim:qprt-err-red}). 
\begin{claim}\label{claim:prt-err-red}
Let $\delta >0$. Then $\log \prt_\delta(f) \leq
O(\log\frac{1}{\delta})  \cdot  \log \prt_{1/8}(f) .$
\end{claim}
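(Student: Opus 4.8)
\textbf{Proof plan for \cref{claim:prt-err-red}.} The plan is to transcribe, almost verbatim, the argument used for the query partition bound in \cref{claim:qprt-err-red}: the communication partition bound has the same LP structure with rectangles in place of subcubes, and in fact it is slightly cleaner since the objective carries no $2^{|A|}$ weight. First I would fix an optimal primal solution $\langle w_{z,R}\rangle$ for $\prt_{1/8}(f)$, so that $\sum_z\sum_R w_{z,R} = \prt_{1/8}(f)$ and the two equalities/inequalities of the primal hold. For an odd positive integer $t$ to be chosen, I would then define a ``$t$-fold majority-amplified'' assignment: for every rectangle $R$ and every $z$, set
\[
v_{z,R} \;:=\; \sum_{\substack{(z_1,\dots,z_t)\,:\,\mathrm{maj}(z_i)=z\\[1pt] (R_1,\dots,R_t)\,:\,\bigcap_i R_i = R}} \;\prod_{i=1}^t w_{z_i,R_i}\,.
\]
This is well defined because an intersection of rectangles is again a rectangle, and $v_{z,R}\ge 0$ trivially, so \eqref{eq:constr3} holds.

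Next I would verify the covering constraint \eqref{eq:constr1} for the target error $\delta$. Fix $(x,y)$. Since $\bigcap_i R_i \ni (x,y)$ if and only if every $R_i\ni(x,y)$, the sum $\sum_{R\ni(x,y)} v_{f(x,y),R}$ equals $\sum_{(z_1,\dots,z_t):\,\mathrm{maj}=f(x,y)}\prod_i\big(\sum_{R_i\ni(x,y)} w_{z_i,R_i}\big)$. Viewing the optimal solution as inducing, for each coordinate $i$ independently, a probability distribution on pairs $(z_i,R_i)$ with $R_i\ni(x,y)$ of total mass $\sum_{z_i}\sum_{R_i\ni(x,y)} w_{z_i,R_i}=1$ (this is the original equality constraint \eqref{eq:constr2}), under which $\Pr[z_i=f(x,y)]=\sum_{R_i\ni(x,y)} w_{f(x,y),R_i}\ge 1-\tfrac18$, the above sum is exactly $\Pr[\mathrm{maj}(z_1,\dots,z_t)=f(x,y)]\ge 1-\exp\!\big(-2(\tfrac12-\tfrac18)^2 t\big)$ by a Chernoff/Hoeffding bound. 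Choosing $t=O(\log\tfrac1\delta)$ makes this at least $1-\delta$. The equality constraint \eqref{eq:constr2} for the new solution is immediate: $\sum_{R\ni(x,y)}\sum_z v_{z,R}$ telescopes to $\prod_{i=1}^t\big(\sum_{z_i}\sum_{R_i\ni(x,y)} w_{z_i,R_i}\big)=1$.

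Finally I would estimate the new objective. Since every tuple $\big((z_1,\dots,z_t),(R_1,\dots,R_t)\big)$ contributes to exactly one pair $(z,R)=\big(\mathrm{maj}(z_i),\bigcap_i R_i\big)$,
\[
\sum_z\sum_R v_{z,R} \;=\; \prod_{i=1}^t\Big(\sum_{z_i}\sum_{R_i} w_{z_i,R_i}\Big) \;=\; \prt_{1/8}(f)^{\,t}.
\]
Hence $\langle v_{z,R}\rangle$ is feasible for $\prt_\delta(f)$ with objective at most $\prt_{1/8}(f)^t$, and taking logarithms gives $\log\prt_\delta(f)\le t\log\prt_{1/8}(f)=O\!\big(\log\tfrac1\delta\big)\log\prt_{1/8}(f)$, as claimed. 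There is no real obstacle here; the only points needing a moment's care are that the amplified solution still respects the rectangle structure (it does, since rectangles are closed under intersection) and that the $t$ ``majority-vote'' random variables are genuinely independent so that the Chernoff bound applies (immediate from the product form of the construction). (The edge case $\delta\ge 1/8$ is trivial by monotonicity of $\prt_\ve$ in $\ve$.)
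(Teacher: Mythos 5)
Your proposal is correct and is precisely the argument the paper intends: the paper gives no separate proof of \cref{claim:prt-err-red}, simply noting that the error is reduced "in a fashion similar to the query setting" (\cref{claim:qprt-err-red}), and your $t$-fold majority amplification over intersections of rectangles, with the Chernoff bound and the multiplicativity of the objective, is exactly that transcription (indeed slightly cleaner, as you note, since there is no $2^{|A|}$ weight).
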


\paragraph*{Relaxed partition bound:} Kerenidis~\etal~\cite{KerenidisLLRX2015} defined a relaxation of the partition bound by relaxing the equality constraint in the primal.

\begin{definition}[Relaxed partition bound~\cite{KerenidisLLRX2015}] For a function $f: \X \times \Y \rightarrow \Z$ and $\ve \in (0,1)$, 
the $\ve$-relaxed partition bound of $f$, denoted $\rprt_\ve(f)$, is given by the  optimal value of the following linear program. \\

{\footnotesize
\begin{minipage}{2.2in}
    \centerline{ \underline{Primal}}
 \begin{align*}
    {\min}\quad \sum_{z} \sum_{R}  w_{z,R} \\
    \sum_{R: (x,y) \in R} w_{f(x,y),R} &\geq 1 - \epsilon, && \forall (x,y)\\
    \sum_{R: (x,y) \in R} \quad \sum_{z}  w_{z,R} &\leq 1, && \forall (x,y)  \\
      w_{z,R} &\geq 0, &&  \forall (z,R) \enspace .
    \end{align*}
\end{minipage}\hspace{0.5in}
\begin{minipage}{2.2in}\vspace{-0.2in}
    \centerline{\underline{Dual}}\vspace{-0.0in}
 \begin{align*}
 {\max}\quad    (1-\epsilon)\sum_{(x,y)} \mu_{x,y} - \sum_{(x,y)}\phi_{x,y} \\
      \sum_{(x,y)\in R \cap f^{-1}(z)} \mu_{x,y}   - \sum_{(x,y)\in  R} \phi_{x,y}  & \leq 1, && \forall (z,R)\\
      \mu_{x,y} &\geq 0, && \forall (x,y)\\
      \phi_{x,y} &\geq 0, && \forall (x,y)\enspace . 
 \end{align*}
\end{minipage}
}
\end{definition}
Clearly, $\CC^{\rm pub}_\ve(f) \geq \log(\prt_\ve(f))\geq
\log(\rprt_\ve(f)) \geq \log
\srec_{\epsilon,\epsilon}(f)$. Kerenidis~\etal~\cite{KerenidisLLRX2015}
showed that relaxed partition bound is upper bounded by information
complexity. Combining these facts with the error reduction claims for
partition bound and information complexity
(\cref{claim:IC-err-red,claim:prt-err-red}) yields \cref{prop:sr_rprt}.





\end{document}